\documentclass[twocolumn,; showpacs,preprintnumbers,final,a4paper,aps,citeautoscript,footinbib,pra,superscriptaddress,%dvipdfmx
]{revtex4-2}
\usepackage{amsmath,amssymb,amsfonts}
\usepackage{physics}
\usepackage{graphicx,color}% Include figure files
\graphicspath{{./figures/}}
\usepackage{float}
\usepackage{bm}% bold math
\usepackage{times}
\usepackage[colorlinks,bookmarks=true,citecolor=blue,linkcolor=blue,urlcolor=blue,breaklinks=true]{hyperref}
\usepackage{amsthm}
\usepackage{mathrsfs}
\usepackage{mleftright}
\usepackage{mathtools}
\newtheorem{thm}{Theorem}
\newtheorem{lem}[thm]{Lemma}

\newcommand{\rmm}[1]{{\rm{#1}}}
\newcommand{\bmm}[1]{{\bm{#1}}}

\newcommand{\del}[2]{ \frac{\partial #1}{\partial #2}}

\newcommand{\eno}[1]{\rmm{e}^{#1}}

\newcommand{\lr}[3] { \left#1 #2 \right#3}
\newcommand{\hc}[1]{ {#1} ^{\dagger} }

\newcommand{\com}[2]{\lr{[}{#1,#2}{]} }
\newcommand{\kitaiti}[1]{ \lr{\langle} {#1} {\rangle} }

\newcommand{\chuukakko}[1]{\lr{\{}{#1}{\}}}

\newcommand{\acom}[2]{\chuukakko{#1,#2}}

\newcommand{\migi}{\rightarrow}
\newcommand{\hdr}{\leftarrow}

\newcommand{\hf}{\frac{1}{2}}

\newcommand{\eps}{\epsilon}

\newcommand{\up}{\uparrow}
\newcommand{\dw}{\downarrow}
\newcommand{\sx}{\sigma^{x}}
\newcommand{\sy}{\sigma^{y}}
\newcommand{\sz}{\sigma^{z}}

\newcommand{\im}{\rmm{i}}
\newcommand{\nt}{\notag \\}

\newcommand{\FP}{\qty(-1)^F}
%%%%%%%%%%%%%%%%%%%%%%%%%%%%%%%%%%%%%%%%%%%%%%%%%%%%%%%%%%%%%%%
\begin{document}
%%%%%%%%%%%%%%%%%%%%%%%%%%%%%%%%%%%%%%%%%%%%%%%%%%%%%%%%%%%%%%%
%%%%%%%%%%%%%%%%%%%%%%%%%%%%%%%%%%%%%%%%%%%%%%%%%%%%%%%%%%%%%%%
%\title{Supersymmetry Breaking and Superfrustration in a Cooper-pair-like Generalized Nicolai Model}
\title{Interacting Kitaev Chain with $\mathcal{N}=1$ Supersymmetry}
%%%%%%%%%%%%%%%%%%%%%%%%%%%%%%%%%%%%%%%%%%%%%%%%%%%%%%%%%%%%%%%
\author{Urei Miura} 
\email{urei.miura@yukawa.kyoto-u.ac.jp}
\affiliation{%
Division of Physics and Astronomy, Graduate School of Science, 
Kyoto University, Kyoto 606-8502, Japan}%
\affiliation{Center for Gravitational Physics and Quantum Information, Yukawa Institute for Theoretical Physics, Kyoto University, Kitashirakawa Oiwake-Cho, Kyoto 606-8502, Japan}%
%%%%%%%%%%%%%%%
\author{Kenji Shimomura} 
\affiliation{%
Division of Physics and Astronomy, Graduate School of Science, 
Kyoto University, Kyoto 606-8502, Japan}%
\affiliation{Center for Gravitational Physics and Quantum Information, Yukawa Institute for Theoretical Physics, Kyoto University, Kitashirakawa Oiwake-Cho, Kyoto 606-8502, Japan}%
%%%%%%%%%%%%%%%
\author{Keisuke Totsuka}
\affiliation{Center for Gravitational Physics and Quantum Information, Yukawa Institute for Theoretical Physics, Kyoto University, Kitashirakawa Oiwake-Cho, Kyoto 606-8502, Japan}%
%%%%%%%%%%%%%%%
%\collaboration{MUSO Collaboration}%\noaffiliation

%\affiliation{Third institution, the second for Charlie Author}%
%\author{Delta Author}
%\affiliation{Authors' institution and/or address\\This line break forced with \textbackslash\textbackslash}%

%\collaboration{CLEO Collaboration}%\noaffiliation
%%%%%%%%%%%%%%%%%%%%%%%%%%%%%%%%%%%%%%%%%%%%%%%%%%%%%%%%%%
\date{\today}% It is always \today, today,
             %  but any date may be explicitly specified
%%%%%%%%%%%%%%%%%%%%%%%%%%%%%%%%%%%%%%%%%%%%%%%%%%%%%%%%%%
\begin{abstract}
Lattice models with supersymmetry are known to exhibit a variety of remarkable properties that do not exist in the relativistic models. In this paper, we introduce an interacting generalization of the Kitaev chain of Majorana fermions with $\mathcal{N} = 1$ supersymmetry and investigate its low-energy properties, paying particular attention to the ground-state degeneracy and low-lying fermionic excitations. First, we establish the existence of a phase with spontaneously broken supersymmetry and a phase transition out of it with the help of variational arguments and the exact ground state. We then develop, based on the superfield formalism, a simple mean-field theory, in which the order parameters detect supersymmetry-breaking, to understand the ground-state phases and low-lying Nambu-Goldstone fermions. At the solvable point ({\em frustration-free point}), the exact ground state of an open chain exhibits large degeneracy of the order of the system size, which is attributed to the existence of a zero-energy domain wall (dubbed kink or skink) separating the topological and trivial states of Majorana fermions. Our results may shed new light on the intriguing ground-state properties of supersymmetric lattice models.

\end{abstract}
%%%%%%%%%%%%%%%%%%%%%%%%%%%%%%%%%%%%%%%%%%%%%%%%%%%%%%%%%
\maketitle
%%%%%%%%%%%%%%%%%%%%%%%%%%%%%%%%%%%%%%%%%%%%%%%%%%%%%%%%%
%\tableofcontents

%%%%%%%%%%%%%%%%%%%%%%%%%%%%%%%%%%%%%%%%%%%%%%%%%%%%%%%%%
\section{INTRODUCTION}
%%%%%%%%%%%%%%%%%%%%%%%%%%%%%%%%%%%%%%%%%%%%%%%%%%%%%%%%%
Symmetry has been playing a vital role in modern physics.
For instance, it provides unifying viewpoints to the list of existing elementary particles and is also indispensable in labeling and classifying different phases of matter.  In contrast to the usual bosonic symmetry, supersymmetry which is generated by fermionic operators has many remarkable properties. 
These properties, it is believed to contribute to the resolution of the hierarchy problem and the unification of forces in particle physics \cite{PhysRevD.13.974,PhysRevD.14.1667}. 
In condensed matter or statistical physics, on the other hand, supersymmetry has been discussed in different contexts. For instance, it has been utilized for a long time as a convenient theoretical tool, e.g., in the study of disordered systems \cite{efetov1999supersymmetry,cecotti1983stochastic,gozzi1993stochastic,parisi1979random}, whereas there have also been various proposals for the physical realizations of supersymmetry.  
Probably, the most classic example is the emergent $\mathcal{N}=1$ supersymmetry ($\mathcal{N}$ being the number of fermionic generators) at the tricritical point of the two-dimensional Ising model with vacancies \cite{Friedan-Q-S-SUSY-85,Qiu-86}. 
Other arenas for supersymmetry include topological insulators and superconductors \cite{grover2014emergent,qi2009time}.

While in many non-relativistic applications mentioned earlier, supersymmetry concerns actual bosons and fermions (as in Bose-Fermi mixtures in cold gases \cite{yu2008supersymmetry,yu2010simulating,snoek2005ultracold,snoek2006theory,lozano20071+,shi2010supersymmetric,lai2015relaxation,blaizot2015spectral,blaizot2017goldstino,tajima2021goldstino}), there has recently been a growing interest in lattice models of supersymmetry that exclusively involve fermions.  
Since the pioneering work by Nicolai \cite{nicolai1976supersymmetry}, a variety of properties that do not exist in the relativistic counterparts have been uncovered \cite{nicolai1976supersymmetry,nicolai1977extensions,fendley2003lattice,fendley2005exact,huijse2008superfrustration,huijse2012supersymmetric,sannomiya2016supersymmetry,sannomiya2017supersymmetry,sannomiya2019supersymmetry}. Among the main issues in this field are significant ground-state degeneracy dubbed supersfustration \cite{huijse2008superfrustration,huijse2012supersymmetric} and the spontaneous breaking of supersymmetry. 

Yet another important problem is the nature of the gapless fermionic excitations [the Nambu-Goldstone (NG) fermions] that are expected to appear when supersymmetry is spontaneously broken \cite{Salam-S-74,witten1982constraints}. However, unlike in the case of NG bosons (including those in non-relativistic systems) for which comprehensive classification theories are known \cite{PhysRevLett.108.251602,PhysRevLett.110.091601}, little is known for their fermionic counterparts except in the relativistic cases. In fact, some supersymmetric lattice models are known to exhibit exotic dispersion relations \cite{sannomiya2017supersymmetry,sannomiya2019supersymmetry,fendley2019free}, requiring further research in this direction.

In realizing supersymmetry in purely fermionic systems, of particular interest are Majorana fermions which are anticipated to appear in topological superconductors and hold potential applications in quantum computing (see, e.g., Refs.~\cite{Nayak-S-S-F-S-08,Alicea-review-12} for reviews).
For instance, suggestions for experimentally realizable supersymmetry with Majorana systems are detailed in, e.g.,  Ref.~\cite{huang2017supersymmetry}. Also, one-dimensional Majorana systems that host the above-mentioned supersymmetric tricriticality with the central charge $c=\frac{7}{10}$ have been proposed in Refs.~ \cite{rahmani2015emergent,li2018numerical,rahmani2015phase,o2018lattice}. An even larger $\mathcal{N}=4$ extended supersymmetry in Majorana nanowires is also discussed \cite{marra20221d}. 

Despite its exotic look, supersymmetry can appear quite generally in condensed-matter systems either in emergent or explicit manners. In fact, even a general recipe to realize supersymmetry in systems of Majorana fermions has been proposed \cite{hsieh2016all}. However, many previous studies either explored supersymmetry emergent at low energies or took an approach where the locality of the supercharges was lost by, e.g., taking the square root of the Hamiltonian. Therefore, constructing supersymmetric models of interacting lattice fermions that possess {\em local} supercharges and display intriguing properties would be an interesting direction to pursue. 

Inspired by the results in Ref.~\cite{sannomiya2019supersymmetry}, we introduce in this paper a model of lattice Majorana fermions with explicit $\mathcal{N}=1$ supersymmetry, whose supercharge is given by a sum of local terms, and investigate its ground-state properties and low-lying fermionic excitations with both analytical and numerical methods. 
One advantage of studying systems with supersymmetry is that the zero-energy ground state, if exists, can be easily identified as one annihilated by the supercharge. We use this property to find all the exact ground states in a specific case, which helps us to show that the model displays a supersymmetry-restoring transition. Interestingly, this exact ground state exhibits large degeneracy (proportional to the system size) due to a zero-energy domain wall that separates the topological and trivial states of Majorana fermions.   

The organization of this paper is as follows.  
In Sec.~\ref{sec:Model}, we introduce a model of interacting Majorana fermions on a one-dimensional lattice. It possesses a fermionic conserved charge (supercharge) that guarantees that the model is manifestly supersymmetric already at the lattice level. Some discussions on its symmetries and the equivalent spin model are also presented. 
In Sec.~\ref{sec:SUPERSYMMETRY BREAKING}, on top of the standard one, we introduce a modified definition of spontaneous breaking of supersymmetry, which turns out useful in obtaining consistent results for different boundary conditions. Using a variational inequality and the exact ground states at a particular point, we identify the parameter range within which supersymmetry is broken (according to the modified definition). 

We develop a mean-field theory based on the superfield formalism in Sec.~\ref{sec:maenfield} to investigate the breaking of supersymmetry and the emergence of NG fermions. This approach not only predicts that the NG fermions exhibit linear dispersion but also yields results consistent with those of other approaches.
In Sec.~\ref{skinksection}, we first present numerical results on the spectrum in which we observe that, under the open boundary condition, the system at a special value of the parameter exhibits ground-state degeneracy which is proportional to the system size. We then explain it by a superdoublet of domain walls (dubbed kink and skink) connecting the topological and trivial phases of the Majorana fermion. This degeneracy is resolved if we move away from the special point, and using the first-order perturbation we show that these domain-wall states form a gapless spectrum (behaving like $\sim k^{2}$) over the doubly-degenerate ground state. 
We conclude the paper in Sec.~\ref{sec:conclusion}.
%%%%%%%%%%%%%%%%%%%%%%%%%%%%%%%%%%%
%%%%%%%%%%%%%%%%%%%%%%%%%%%%%%%%%%%%%%%%%%%%%%%%%%%%%%%%%%
\section{MODEL}
\label{sec:Model}
%%%%%%%%%%%%%%%%%%%%%%%%%%%%%%%%%%%%%%%%%%%%%%%%%%%%%%%%%%
%%%%%%%%%%%%%%%%%%%%%%%%%%%%%%%%%%%%%%%%%%%%%%%%%%%%%%%%%%
In this section, we introduce a supersymmetric lattice Majorana fermion model 
which has an explicit $\mathcal{N}=1$ supersymmetry and describe its symmetries. 

%%%%%%%%%%%%%%%%%%%%%%%%%%%%%%%%%%%%%%%%%%%%%%%%%%%%%%%%%%%
\subsection{Supercharge and Hamiltonian}
%%%%%%%%%%%%%%%%%%%%%%%%%%%%%%%%%%%%%%%%%%%%%%%%%%%%%%%%%%%
The $\mathcal{N}=1$ supersymmetry (SUSY) is a fermionic symmetry defined by the super-algebra 
of hermitian operators $R$ (supercharge), $\qty(-1)^F$ (fermion parity), and $H$ (Hamiltonian) as given in the following equations: 
\cite{witten1981dynamical,witten1982constraints,cooper1995supersymmetry}
\begin{equation}
\label{superalgebra}
\begin{split}
    &H=\hf R^2=\frac{1}{4}\acom{R}{R} \; , \\
    &\acom{R}{\qty(-1)^F}=0 \; , \\
    &\qty{\qty(-1)^F}^2=+1  \; .
\end{split}
\end{equation}
 and all the non-trivial spectral properties follow immediately from 
the following fundamental relations:
\begin{equation}
\begin{split}
& \com{H}{R} =0 \; , \quad
\com{H}{(-1)^{F}} =0  \; .
\end{split} 
\label{eqn:fundamental-rel}
\end{equation}
Our approach first defines the supercharge 
$R$ such that it anticommutes with the fermion parity 
$\FP$, which is defined from Majorana fermions. Then, we determine the Hamiltonian $H$ as the square of $R$.

In this paper, we consider a one-dimensional lattice with $L$ sites.  
On the $j$th site, there exist two Majorana fermions, $\beta_j=\hc{\beta}_j$ and 
$\gamma_j=\hc{\gamma}_j$ (Fig.~\ref{MajoChain})
which obey the standard Clifford algebra:
\begin{align}
& \acom{\beta_{i}}{\beta_{j}}=\acom{\gamma_{i}}{\gamma_{j}}=2\delta_{i,j} \nt
&\acom{\beta_{i}}{\gamma_{j}}=0  
\end{align}
for all $i,j=1,\ldots,L$. 
%%%%%%%%%%%%%FIG%%%%%%%%%%%%%%%%%%
\begin{figure}[htbp]
  \includegraphics[width=\columnwidth,clip]{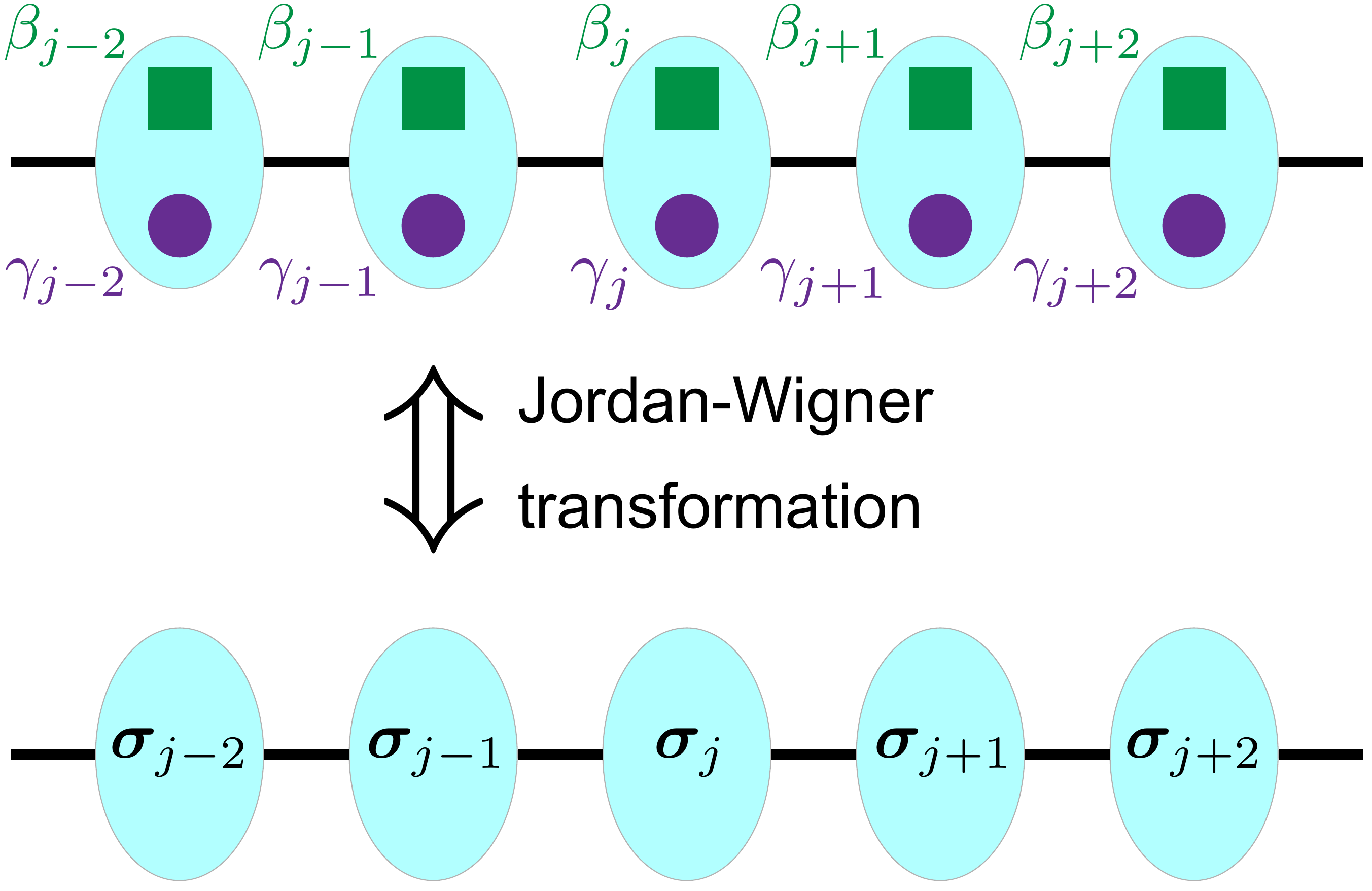}
  \caption{ 
One-dimensional lattice on which the model \eqref{eq:model} is defined. 
The squares and circles on the individudal sites are (represented by large blue circles) are the local Majorana fermions $\beta_j$ and $\gamma_j$, 
respectively.  The two-dimensional fermion Fock space on each site spanned by $\beta_{j}$ and $\gamma_{j}$ is transformed to 
a two-dimensional Hilbert space of a spin-1/2 ($\bmm{\sigma}_{j}$) via the Jordan-Wigner transformation and vice-versa. 
\label{MajoChain}}
\end{figure}
%%%%%%%%%%%%%FIG%%%%%%%%%%%%%%%%%%
With these Majorana fermions, we define a one-parameter family of the real supercharge $R(\alpha)$ in the periodic boundary condition (PBC) as:  
\begin{equation}
\label{eqn:superchargePBC}
  R^{\text{PBC}}\qty(\alpha)=\sum_{j=1}^{L} \qty{\qty(1-\alpha)\beta_j+\alpha \im \beta_j\beta_{j+1}\gamma_j}
      \quad (0\leq \alpha\leq1) \; ,
\end{equation}
where $\beta_{L+j}:=\beta_{j}$ and $\gamma_{L+j}:=\gamma_{j}$, and in the open boundary condition (OBC), as:
\begin{equation}
\label{eqn:superchargeOBC}
  R^{\text{OBC}}\qty(\alpha) =\qty(1-\alpha)\sum_{j=2}^{L-1} \beta_j+\alpha \sum_{j=1}^{L-1}  \im \beta_j\beta_{j+1}\gamma_j
      \quad (0\leq \alpha\leq1) \; .
\end{equation}
By direct calculation, we see that the Hamiltonian $H\qty(\alpha)$ takes the following form: 
\begin{equation}
  \label{eq:model}
\begin{split}
 H\qty(\alpha) & = \frac{1}{2} R(\alpha)^{2}  \\
 & =  \qty(1-\alpha)^2 H_0 + \alpha\qty(1-\alpha) H_\text{CTI} + \alpha^2 H_\text{int} \; .
 \end{split}
\end{equation}
The expressions of the three terms $H_0$, $H_\text{CTI}$, and $H_\text{int}$ depend on the boundary conditions.  
Specifically, they are given by:
\begin{subequations}
\begin{align}
  &H^{\text{PBC}}_0 =\frac{L}{2}
  \label{eqn:H_0_PBC} \\
  &H^{\text{PBC}}_\text{CTI} =\sum_{j=1}^{L} \qty(\im \beta_{j+1}\gamma_j-\im\beta_{j-1}\gamma_{j-1})  
  \label{eqn:H_CTI_PBC} \\
\begin{split}
  &H^{\text{PBC}}_\text{int}
  =\hf \qty(\sum_{j=1}^{L}\im \beta_j\beta_{j+1}\gamma_j)^2  
  =\frac{L}{2}-\sum_{j=1}^{L}\beta_{j-1}\beta_{j+1} \gamma_{j-1} \gamma_{j}  
\end{split}
\label{eqn:H_int_PBC}
\end{align}
\end{subequations}
for PBC, and 
%%%%%%%%%%%%%PBC->OBC%%%%%%%%%%%%%%%%%%%%%%%%%%%%%%%%%%%%%%%%%%%%%%%%%%%%
\begin{subequations}
\begin{align}
  &H^{\text{OBC}}_0 =\frac{L-2}{2}
  \label{eqn:H_0_OBC} \\
  &H^{\text{OBC}}_\text{CTI} =\sum_{j=2}^{L-1} \qty(\im \beta_{j+1}\gamma_j-\im\beta_{j-1}\gamma_{j-1})  
  \label{eqn:H_CTI_OBC} \\
\begin{split}
 & H^{\text{OBC}}_\text{int}
  =\hf \qty(\sum_{j=1}^{L-1}\im \beta_j\beta_{j+1}\gamma_j)^2  \\
 &  \phantom{ H_\text{int}^{\text{OBC}} }
 =\frac{L-1}{2}-\sum_{j=2}^{L-1}\beta_{j-1}\beta_{j+1} \gamma_{j-1} \gamma_{j} 
\end{split}
\label{eqn:H_int_OBC}
\end{align}
\label{eqn:H^OBC}
\end{subequations}
for OBC.  
The $O\qty(\alpha\qty(1-\alpha))$ part $H_\text{CTI}$ is nothing but the Kitaev chain \cite{kitaev2001unpaired} of non-interacting Majorana fermions at its critical point, and 
the $O(\alpha^{2})$ term $H_{\text{int}}$ introduces interactions among them.  
The operator $(-1)^{F}$ appearing in the supersymmetric algebra \eqref{superalgebra} is given by the physical fermion parity:
\begin{align}
  \qty(-1)^F\coloneqq\prod_{j=1}^L \im \beta_j \gamma_j \; ,
\label{eqn:F-parity-Majorana}
\end{align}
which, by the relations \eqref{superalgebra}, is conserved.

%%%%%%%%%%%%%%%%%%%%%%%%%%%%%%%%%%%%%%%%%%%%%%%%%%%%%%%%%
\subsection{Equivalent spin model}
\label{sec:spin-rep}
%%%%%%%%%%%%%%%%%%%%%%%%%%%%%%%%%%%%%%%%%%%%%%%%%%%%%%%%%
For practical purposes, it is convenient to express the fermionic model \eqref{eq:model} as an equivalent spin model.  
To this end, we use the Jordan-Wigner transformation:
\begin{equation}
\begin{split}
&\beta_i=\sx_i\prod_{j\qty(<i)}\qty(-\sz_j)   \, , \quad 
\gamma_i=\sy_i\prod_{j\qty(<i)}\qty(-\sz_j)
\end{split}
\label{eqn:JW-tr}
\end{equation}
which transforms a pair of Majorana fermions $(\beta_i,\gamma_i)$ into a set of Pauli matrices 
$\bmm{\sigma}_{i}$. 
By the transformation, the supercharge $R\qty(\alpha)$ in Eqs.~\eqref{eqn:superchargePBC} and \eqref{eqn:superchargeOBC} are mapped to the following non-local operators:
\begin{equation}
\begin{split}
& R^{\text{PBC}}\qty(\alpha) \\
&=\qty(1-\alpha)\sum_{i=1}^{L}\sx_i\prod_{j\qty(<i)}\qty(-\sz_j)
    -\alpha\sum_{i=1}^{L-1}\sx_{i+1}\prod_{j\qty(<i)}\qty(-\sz_j)  \\
&    \phantom{=}  +\alpha\sz_{L}\sx_{1} 
\intertext{and} 
& R^{\text{OBC}}\qty(\alpha)    \\
&= \qty(1-\alpha)\sum_{i=2}^{L-1}\sx_i\prod_{j\qty(<i)}\qty(-\sz_j) 
- \alpha\sum_{i=1}^{L-1}\sx_{i+1}\prod_{j\qty(<i)}\qty(-\sz_j)  \; ,
\end{split}
\end{equation}
respectively.  
The conserved fermion parity \eqref{eqn:F-parity-Majorana} is now translated, in the spin language, into:
\begin{equation}
\qty(-1)^F=\prod_{j=1}^L \qty(-\sz_j)   \;  .
\label{eqn:F-parity-spin}
\end{equation}
The two fermion states with $\im \beta_{j}\gamma_j= \mp 1$ are translated into the spin eigenstates $\sz = \pm 1$ which will be denoted by: 
$\ket{\up}$ ($\sz=+1$) and $\ket{\dw}$ ($\sz= -1$), respectively. 

The two terms $H_\text{CTI}$ and $H_\text{int}$ of the Hamiltonian $H\qty(\alpha)$ \eqref{eq:model} 
are transformed differently depending on the boundary conditions.  
For OBC, Eqs.~\eqref{eqn:H_CTI_OBC} and \eqref{eqn:H_int_OBC} are now replaced by:
\begin{equation}
\begin{split}
&H_\text{CTI}^{\text{OBC}} =\sum_{j=2}^{L-1}
\qty(-\sx_j\sx_{j+1}+\sz_{j-1})
   \\
& H_\text{int}^{\text{OBC}} =\frac{L-1}{2}- \sum_{j=2}^{L-1}\sz_{j-1}\sx_{j}\sx_{j+1}   \;  .
\end{split}
\label{eq:spin-rep-OBC}
\end{equation}
Physically, $H_\text{CTI}$ is the transverse-field Ising model at the critical point, and $H_\text{int}$ adds three-spin interactions.  
Special care must be taken in the case of PBC due to the non-local nature of the Jordan-Wigner transformation \eqref{eqn:JW-tr}. 
Specifically, the bulk terms in \eqref{eqn:H_CTI_PBC} and \eqref{eqn:H_int_PBC} are transformed into the usual expressions:
\begin{equation}
\begin{split}
&H_\text{CTI,bulk}^{\text{PBC}} =-\sum_{j=1}^{L-1}
\sx_j\sx_{j+1}+\sum_{j=1}^{L}\sz_j
   \\
& H_\text{int,bulk}^{\text{PBC}} =\frac{L}{2}- \sum_{j=2}^{L-1}\sz_{j-1}\sx_{j}\sx_{j+1}   \;  ,
\end{split}
\label{eq:spin-rep-PBC}
\end{equation}
while the boundary terms after the Jordan-Wigner transformation now acquire the parity-dependence:
\begin{equation}
\label{eq:boundary}
\begin{split}
H_{\text{boundary}}^{\text{PBC}}=& \alpha\qty(1-\alpha)\qty(-1)^F\sx_L\sx_1+\alpha^2\qty(-1)^F\sz_{L-1}\sx_L\sx_1  \\
& - \alpha^2\sz_{L}\sx_1\sx_2  \; ,
\end{split}
\end{equation}
from which we can read off the following fermion-parity-dependent boundary condition for the resulting spin system:
\begin{equation}
\bmm{\sigma}_{j+L} = - (-1)^{F} \bmm{\sigma}_{j} \; .
\label{eqn:BC-for-spin}
\end{equation}

By construction, the resulting spin-1/2 model possesses hidden $\mathcal{N}=1$ supersymmetry.  
To construct the entire supersymmetric spectrum of the original fermionic model \eqref{eq:model} (with PBC)  
in the language of the spin-1/2 model \eqref{eq:spin-rep-PBC}, we need to ``sew'' together the $(-1)^{F}=+1$ sector of the anti-periodic spin chain and 
the $(-1)^{F}= -1$ sector of the periodic one.   
It is important to note that the spin chain with a {\em fixed} boundary condition (e.g., PBC) does not possess supersymmetry.  
In what follows, numerical simulations for spin systems will explicitly take into account the boundary condition \eqref{eqn:BC-for-spin}.

Last, the model only with the interaction part $H_\text{int}$ is integrable by mapping to free fermions through a non-local transformation \cite{fendley2019free} 
and is known to exhibit the following properties when defined on an open chain \footnote{To be precise, the author of Ref.~\cite{fendley2019free} considers a slightly different model [see Eq.~(1.6) in the paper] which is given by a sum of $H_{\text{int}}$ and its Kramers-Wannier dual. However, the basic properties (solvability, extensive degeneracy, etc.) described here are carried over to $H_{\text{int}}$ as well.}.  
According to the exact solution, it is gapless and does not break supersymmetry in the infinite-size limit.  
Also, it does not possess the Lorentz invariance even in the low-energy limit and, instead, exhibits extensive degeneracy at each energy level, 
which is often referred to as ``superfrustration'' \cite{huijse2008superfrustration,huijse2012supersymmetric,fendley2005exact,PhysRevLett.101.146406}.  

%%%%%%%%%%%%%%%%%%%%%%%%%%%%%%%%%%%%%%%%%%%%%%%%%%%%%%%%%%%
\subsection{Symmetries}
\label{subSec:symmetry}
%%%%%%%%%%%%%%%%%%%%%%%%%%%%%%%%%%%%%%%%%%%%%%%%%%%%%%%%%%%
The Hamiltonian $H\qty(\alpha)$ defined on a system of size $L$ has the following symmetries.   
Due to the construction \eqref{superalgebra} of the Hamiltonian by the supercharge $R(\alpha)$, the system possesses supersymmetry:
\begin{align}
\com{H\qty(\alpha)}{R\qty(\alpha)}=0 
\end{align}
and conserves the fermion parity:
\begin{align}
\com{H\qty(\alpha)}{\qty(-1)^F}=0 \; .
\end{align}

On the periodic boundary condition, the model (\ref{eq:model}) has the translation symmetry $\mathcal{T}$ that simultaneously shifts $\beta_i$ and $\gamma_i$ by one site:  
$\beta_i \overset{\mathcal{T}}{\mapsto} \beta_{i+1}$, $\gamma_i \overset{\mathcal{T}}{\mapsto} \gamma_{i+1}$.  
Furthermore, the system has an inversion symmetry $\mathcal{I}$ that acts as 
$\beta_i \overset{\mathcal{I}}{\mapsto}   \beta_{L-i \pmod L}$ and 
$\gamma_i \overset{\mathcal{I}}{\mapsto} -\gamma_{L-i-1 \pmod L}$.
The inversion $\mathcal{I}$ preserves the supercharge $\hc{\mathcal{I}} R \qty(\alpha) \mathcal{I} =R \qty(\alpha)$, 
which means that the Hamiltonian $H \qty(\alpha)=\hf R\qty(\alpha)^2$ is kept invariant under $\mathcal{I}$.
Furthermore, this system possesses time-reversal symmetry $\mathscr{T}$, which is defined by the following transformations:
$\im \overset{\mathscr{T}}{\mapsto} -\im$,
$\beta_i \overset{\mathscr{T}}{\mapsto} \beta_{i}$, and
$\gamma_i \overset{\mathscr{T}}{\mapsto} - \gamma_{i}$.

%%%%%%%%%%%%%%%%%%%%%%%%%%%%%%%%%%%%%%%%%%%%%%%%%%%%%%%%%%
%%%%%%%%%%%%%%%%%%%%%%%%%%%%%%%%%%%%%%%%%%%%%%%%%%%%%%%%%%
\section{SUPERSYMMETRY BREAKING}
\label{sec:SUPERSYMMETRY BREAKING}
%%%%%%%%%%%%%%%%%%%%%%%%%%%%%%%%%%%%%%%%%%%%%%%%%%%%%%%%%%
%%%%%%%%%%%%%%%%%%%%%%%%%%%%%%%%%%%%%%%%%%%%%%%%%%%%%%%%%%%
In this section, we prove the existence of spontaneous symmetry breaking (SSB)
of supersymmetry (SUSY) in the supersymmetric lattice model \eqref{eq:model} and estimate the critical value $\alpha_{\text{c}}$ below which 
SUSY is broken. 
As the situation is subtle in the thermodynamic limit, we need to be careful in judging whether or not SUSY is broken in infinite-size systems. 

%%%%%%%%%%%%%%%%%%%%%%%%%%%%%%%%%%%%%%%%%%%%%%%%%%%%%%%
\subsection{Definition of the spontaneous SUSY breaking}
\label{sec:def-SUSY-SSB}
%%%%%%%%%%%%%%%%%%%%%%%%%%%%%%%%%%%%%%%%%%%%%%%%%%%%%%%
As has been emphasized in Ref.~\cite{witten1981dynamical}, the spontaneous breaking of SUSY is rather different from that of ordinary 
global symmetry.  
In fact, there exist models with $\mathcal{N}=2$ where SUSY is always spontaneously broken except at a specific point, even in finite size systems \cite{sannomiya2016supersymmetry,moriya2018supersymmetry,miura2023supersymmetry}. 
In these models, the ground-state energy is strictly positive, and all eigenstates form SUSY doublets, except at a specific point \cite{witten1982constraints,cooper1995supersymmetry}.

However, the scenario changes when considering infinite-size systems, where the definition of the ground-state energy itself might become uncertain. Notably, demonstrating that $E_{\text{g.s.}}(\alpha;L) > 0$ for any finite $L$ is not sufficient to ascertain SUSY breaking in the limit $L \to \infty$ \cite{witten1982constraints}. In this respect, an alternative definition of SUSY SSB has been proposed \cite{sannomiya2016supersymmetry,sannomiya2017supersymmetry,sannomiya2019supersymmetry}, which focuses not on the ground-state energy itself but on its density.
Specifically, we define SUSY as spontaneously broken in a system of length $L$ (including the thermodynamic limit $L \to \infty$) when the ground-state energy $E_{\text{g.s.}}(\alpha;L)$ per site meets the condition \cite{sannomiya2016supersymmetry,sannomiya2017supersymmetry,sannomiya2019supersymmetry}:
\begin{align}
\frac{E_{\text{g.s.}}(\alpha;L)}{L}
\gneq 0 \; .
\label{eqn:SSB-criterion-gen}
\end{align}
For any finite $L$, this criterion reduces to the usual requirement $E_{\text{g.s.}}(\alpha;L) \gneq 0$ \cite{witten1982constraints}. Moreover, in the infinite-size limit $L \to \infty$, this is equivalent to the strict positivity of the ground-state energy density:
\begin{equation}
e_{\text{g.s.}}(\alpha) \coloneqq
\lim_{L\rightarrow \infty} \frac{E_{\text{g.s.}}(\alpha;L)}{L} \gneq 0 \; .
\label{eqn:SSB-criterion}
\end{equation}
As per the definition \eqref{eqn:SSB-criterion-gen}, for SUSY to be broken even in the infinite-size limit, the positive ground-state energy $E_{\text{g.s.}}(\alpha;L)$ must diverge as $L$ or faster.

Henceforth, we will predominantly consider infinite-size systems, unless otherwise specified, and use Eq.~\eqref{eqn:SSB-criterion} as a criterion for assessing SUSY SSB. Furthermore, as can be seen from Appendix \ref{App:specvary}, the definition of SUSY SSB in the infinite system is independent of the boundary conditions.

Given the above criterion for SUSY SSB, we can immediately conclude that SUSY is broken (in a trivial sense) 
in the model \eqref{eq:model} at least when $\alpha=0$ since $e_{\text{g.s.}}(\alpha=0) = H \qty(\alpha=0)/L=\hf$ there.  
Then, the next question is whether SUSY remains broken even for finite $\alpha$ or not.  
When we increase $\alpha$ from zero, the free Hamiltonian $\alpha\qty(1-\alpha)H_{\rmm{CTI}}$ tends to lower the ground-state energy density as $- 4\alpha\qty(1-\alpha)/\pi$ 
[see Eq.~\eqref{eq:Egs-density}].  
Therefore, if we ignore the $\alpha^{2}$ correction from $\alpha^{2} H_{\rmm{int}}$, $e_{\text{g.s.}}(\alpha)$ stays positive when $\alpha$ is small enough and finally reaches zero at a point, above which SUSY is expected to be restored.  
Other information is available in the $\alpha=1$ where the model reduces to the solvable spin model \cite{fendley2019free} 
mentioned in Sec.~\ref{sec:spin-rep}.  The exact results obtained there rigorously establish unbroken SUSY at the $\alpha=1$.   
The simplest scenario that is compatible with these is that there is a single transition at $\alpha=\alpha_{\text{c}}$ that separates 
the SUSY-broken small-$\alpha$ phase from the symmetric phase at $\alpha\approx1$ (see Fig.~\ref{Fig-SUSY-phasediagram}).
%%%%%%%%%%%%%%%%%%%%%%%%%%%%%%%%%%%%%%%%%%%%%%%%%%%%%%
\subsection{Variational argument for $\alpha_\rmm{c}$}
\label{sec:variational}
%%%%%%%%%%%%%%%%%%%%%%%%%%%%%%%%%%%%%%%%%%%%%%%%%%%%%% 

Now let us give a more rigorous ground to the naive argument presented above.  
In general, when the Hamiltonian is given by a sum of non-commuting components: $H=\sum_i H_i$,  
the true ground state $\ket{0}$ of the full Hamiltonian $H$ contains excited states of the partial Hamiltonians $H_i$.  
Then, by the variational principle, the following inequality holds between the ground-state energy $E_{\text{g.s.}}$ of the entire system 
and those [$E_{\text{g.s.}}^\qty(i)$] of the partial Hamiltonians $H_i$:
\begin{equation}
\label{g.s.ineq}
E_{\text{g.s.}}=\bra{0} H \ket{0}
=\sum_i \bra{0} H_i \ket{0}
\geq \sum_i E_{\text{g.s.}}^\qty(i) 
\end{equation}
(the equality holds when the ground state $\ket{0}$ simultaneously optimizes all $H_i$).

Now we apply the above argument to the Hamiltonian \eqref{eq:model} with $H_{0}=\qty(1-\alpha)^2L/2$, $H_{1}=\alpha\qty(1-\alpha)H_\text{CTI}^{\text{PBC}}$, and 
$H_{2} = \alpha^{2} H_\text{int}^{\text{PBC}}$.   
Using the fact that the ground-state energy density of $H_{\rmm{CTI}}^{\text{PBC}}$ and $H_{\rmm{int}}^{\text{PBC}}$ are given respectively by 
$-\frac{4}{\pi}$ [see Eq.~\eqref{eq:Egs-density}] and 0 \cite{fendley2019free}, we find:
\begin{equation}
e_{\text{g.s.}}(\alpha)  \geq  \frac{\qty(1-\alpha)^2}{2} - \frac{4\alpha\qty(1-\alpha)}{\pi} \; .
\label{eqn:var-lower-bound}
\end{equation}
Thus, we see that SUSY SSB occurs at least in the interval
\begin{equation}
0 \leq  \alpha < \frac{\pi}{8+\pi}=0.281969\ldots  
\label{eqn:SSB-phase-variational}
\end{equation}
in which the right-hand side of \eqref{eqn:var-lower-bound} is positive, e.g., $e_{\text{g.s.}}(\alpha) >0$. 
Therefore, assuming that there is a single phase transition at finite $\alpha$, we can conclude that the critical point $\alpha_{\text{c}}$ 
above which the SUSY-symmetric phase persists satisfies $\frac{\pi}{8+\pi} \leq \alpha_{\text{c}} \leq 1$ (Fig.~\ref{Fig-SUSY-phasediagram}).  
We note that a tighter lower bound $\alpha_\text{c} \geq \frac{\pi}{8}=0.392699\ldots$ has been obtained  
in an unpublished work by Sannomiya \cite{sannomiyaDron}.

%%%%%%%% FIG %%%%%%%%%%%%%%%%%%%%%%%%%%%%%%%%%%%%%%%%%%

\begin{figure}[htbp]
  \includegraphics[width=\columnwidth,clip]{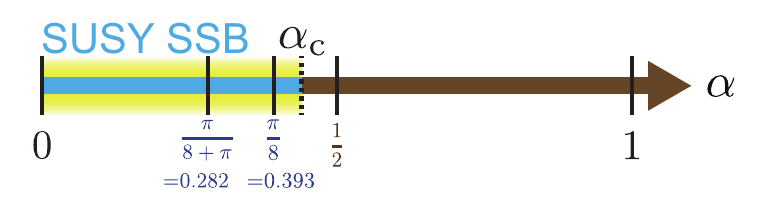}
\caption{%
Preliminary phase diagram of the supersymmetric model \eqref{eq:model} that contains two phases separated by the transition point 
$\alpha_{\text{c}}$.  When $\alpha< \alpha_{\text{c}}$, $e_{\text{g.s.}}(\alpha) >0$ and SUSY is spontaneously broken, while for $\alpha > \alpha_{\text{c}}$, 
SUSY is unbroken.  
The lower bound $\frac{\pi}{8+\pi}$ of $\alpha_{\text{c}}$ which is given by the variational inequality \eqref{eqn:var-lower-bound} is shown 
[a tighter bound $\alpha_{\text{c}} > \pi/8$ obtained in Ref.~\cite{sannomiyaDron} is also shown].  
As is shown in Sec.~\ref{sec:SUSYrestoring}, we can find two {\em exact} zero-energy ground states at $\alpha=\hf$ thereby establishing 
an upper bound $\alpha_{\text{c}} < \hf$.    
  \label{Fig-SUSY-phasediagram}}
\end{figure}
%%%%%%%%%%%%%%%%%%%%%%%%%%%%%%%%%%%%%%%%%%%%%%%%%%%%%%%

%%%%%%%%%%%%%%%%%%%%%%%%%%%%%%%%%%%%%%%%%%%%%%%%%%%
\subsection{Unbroken SUSY at $\alpha=\hf$}
\label{sec:SUSYrestoring}
%%%%%%%%%%%%%%%%%%%%%%%%%%%%%%%%%%%%%%%%%%%%%%%%%%%%
Remarkably, we can find the exact ground states at $\alpha=\hf$, which enable us to unambiguously determine whether SUSY is broken or not.  
We first note that, in the case of periodic boundary conditions ($\beta_{L+j}=\beta_{j}$, $\gamma_{L+j}=\gamma_{j}$), 
the supercharge \eqref{eqn:superchargePBC} can be expressed in two different ways:
\begin{equation}
\begin{split}
R^{\text{PBC}} \qty(\alpha=\hf)&=\hf\sum_{j=1}^L \beta_j\qty(1+\im \beta_{j+1}\gamma_j) \\
&=\hf\sum_{j=0}^{L-1} \beta _{j+1}\qty(1-\im \beta_{j}\gamma_j)  \; .
\end{split}
\label{eqn:R-g1-PBC}
\end{equation}
Now let us consider the states $\ket{\text{A}}$ with $\im \beta_{j+1}\gamma_j= -1$ (for all $i=1,\ldots,L$) 
and $\ket{\text{B}}$ with $\im \beta_{j}\gamma_j=+1$ (for all $i=1,\ldots,L$), which are illustrated in Fig.~\ref{two-GS-PBC}.     
By construction, it is obvious that 
\begin{equation}
R^{\text{PBC}}\qty(\alpha=\hf) \ket{\text{A}} = R^{\text{PBC}}\qty(\alpha=\hf) \ket{\text{B}} = 0 \; ,
\end{equation}
which immediately implies that both are zero-energy ground states of the Hamiltonian $H(\alpha=1/2)$.  
Also, a simple relation: 
\[ (-1)^{F} = \prod_{j=1}^L (\im \beta_j \gamma_j ) = -  \prod_{j=1}^L (- \im \beta_{j+1}\gamma_j) \; , \]
tells that the two states have different eigenvalues of the fermion parity:
\begin{equation}
\qty(-1)^{F} \ket{\text{A}} = - \ket{\text{A}} \; , \quad 
\qty(-1)^{F} \ket{\text{B}} = \ket{\text{B}} .
\end{equation}
Therefore, at least two orthogonal zero-energy ground states $\ket{\text{A}}$ and $\ket{\text{B}}$ exist for any system size $L$ 
thereby implying that SUSY is {\em not} broken at $\alpha=\hf$ 
(in fact, we can show that there is no other zero-energy ground state at $\alpha=\hf$; see Appendix~\ref{sec:proof-2-GS} for the proof).

Note that the two states $\ket{\text{A}}$ and $\ket{\text{B}}$ respectively are regarded as  ``topological'' and ``trivial'' 
in the language of the Kitaev chain \cite{kitaev2001unpaired}.  
In this sense, both topological and trivial states coexist at the special point $\alpha=\hf$.  
Given that $H_\text{CTI}$ is a finely-tuned Kitaev chain at the topological-trivial transition, it's plausible to assume that the competition between the two phases at the critical point results in their coexistence.  However, since both states are realized deep inside 
the off-critical regions, the interaction part $H_{\text{int}}$ that is necessary for supersymmetry might play a crucial role 
for the system to stabilize these two particular states.
%%%%%%%%%%%%%FIG%%%%%%%%%%%%%%%%%%
\begin{figure}[htbp]
  \includegraphics[width=\columnwidth,clip]{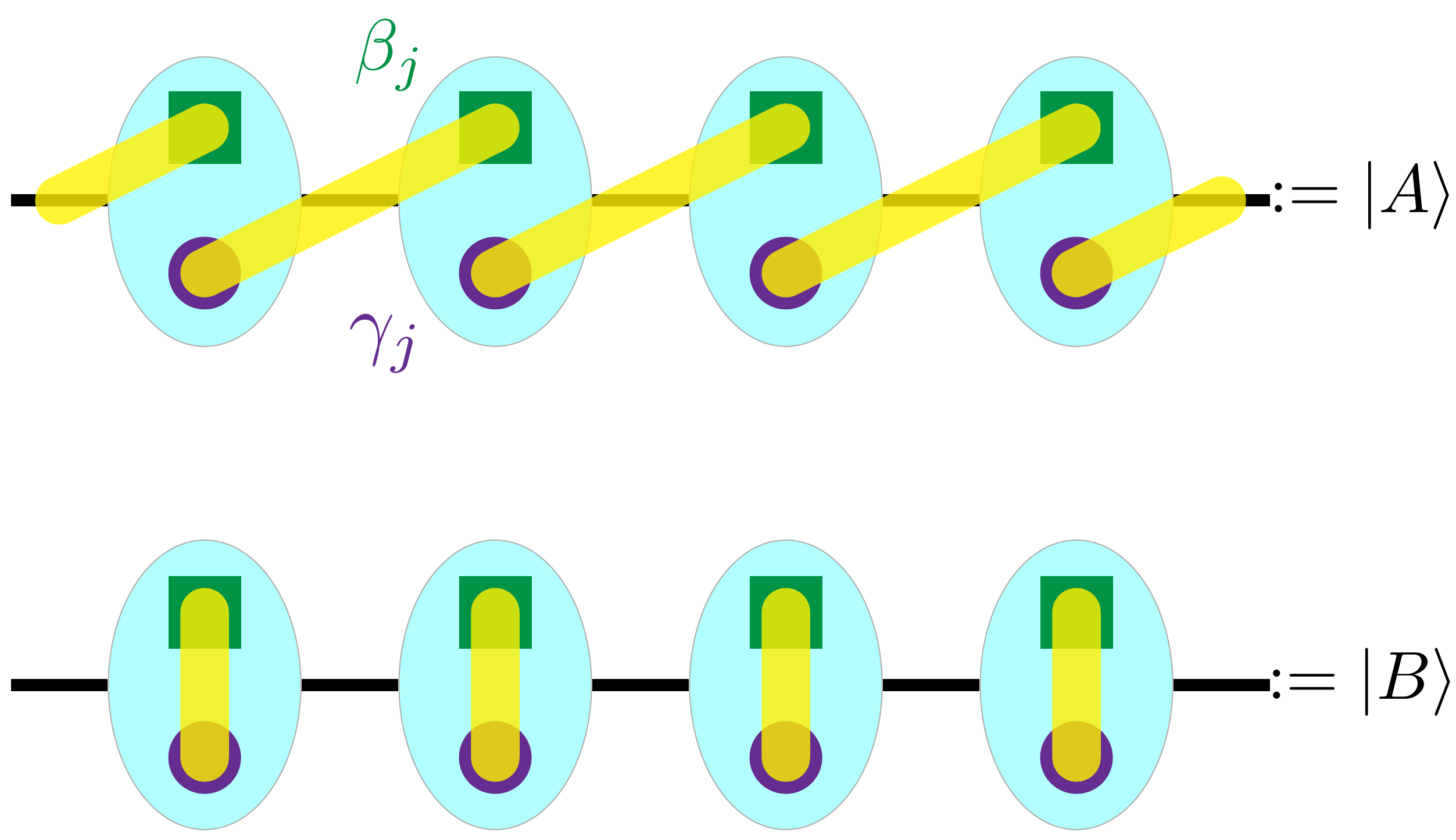}
  \caption{%
 Two degenerate zero-energy ground states of $H\qty(\alpha=\hf)$ on a periodic chain. 
  The colored bonds mean that the pair of Majorana fermions are in the state with $\im \beta_{j+1}\gamma_j= -1$ ($\im \beta_{j}\gamma_j=+1$) 
  in $\ket{\text{A}}$ ($\ket{\text{B}}$).   In the context of the Kitaev chain,  
  the states $\ket{\text{A}}$ and $\ket{\text{B}}$ are called ``topological'' and ``trivial'', respectively. 
\label{two-GS-PBC}}
\end{figure}
%%%%%%%%%%%%%FIG%%%%%%%%%%%%%%%%%%

%%%%%%%%%%%%%%%%%%%%%%%%%%%%%%%%%%%%%%%%%%%%%%%%%%%%%%%%%%
\section{Mean-Field Approach}
\label{sec:maenfield}
%%%%%%%%%%%%%%%%%%%%%%%%%%%%%%%%%%%%%%%%%%%%%%%%%%%%%%%%%%
%%%%%%%%%%%%%%%%%%%%%%%%%%%%%%%%%%%%%%%%%%%%%%%%%%%%%%%%%%
In the previous section, we have rigorously established that the supersymmetric lattice model \eqref{eq:model} exhibits 
a transition at $\alpha=\alpha_{\text{c}}$ $\qty(\pi/8 \leq \alpha_{\text{c}}\leq \hf)$ from a SUSY-broken phase for small-$\alpha$ 
to a symmetric one for large-$\alpha$.  
In this section, we try to get more insight into the transition and the phase structure using a simple mean-field theory.  
To this end, we will first describe the model \eqref{eq:model} using the superfield formalism. By employing the superfield formalism, 
the order parameters for detecting SUSY breaking are automatically introduced.  Assuming that these order parameters 
are spatially homogeneous, we can introduce an effective Hamiltonian that allows us to automatically describe supersymmetry breaking 
and the associated transition within the mean-field framework, and find the excitations corresponding to the NG (Nambu-Goldstone) fermions.
%%%%%%%%%%%%%%%%%%%%%%%%%%%%%%%%%%%%%%%%%%%%%%%%%%%%%%%%%%%
\subsection{Superfield formalism}
%%%%%%%%%%%%%%%%%%%%%%%%%%%%%%%%%%%%%%%%%%%%%%%%%%%%%%%%%%%
Using a real Grassmann number, $\theta^2=0$, and the time variabel $t$, we first introducing the supercharges $\mathcal{Q}$ and the supercovariant derivative 
$\mathcal{D}$:
\begin{align}
    \mathcal{Q}&=\frac{\partial}{\partial\theta}+\im \theta \frac{\partial}{\partial t} \\
    \mathcal{D}&=\frac{\partial}{\partial\theta}-\im \theta \frac{\partial}{\partial t},
\end{align}
which satisfy $\mathcal{Q}^2=-\mathcal{D}^2=\im \del{}{t}$ and the anticommutation relation $\acom{\mathcal{Q}}{\mathcal{D}}=0$.
Next, we define the Majorana superfields $\Gamma_j$ and $B_j$ as \cite{fu2017supersymmetric,li2017supersymmetric}:
\begin{equation}
\begin{split}
  B_j:&=\beta_j+\theta\sqrt{2} F_j \\
  \Gamma_j:&=\gamma_j+\theta\sqrt{2} G_j,
\end{split}
\end{equation}
where $F_j$ and $G_j$ act as real bosonic auxiliary fields. Using these, we can construct the action corresponding to our model \eqref{eq:model} as follows:
\begin{equation}
\begin{split}
 & S\qty[\alpha;\beta,\gamma,F,G] \\
 &  \coloneqq\frac{1}{4}\int \rmm{d}t \int \rmm{d}\theta 
  \sum_j\qty(B_j\mathcal{D} B_j+\Gamma_j \mathcal{D}\Gamma_j)    \\
& \phantom{\coloneqq} 
- \frac{1}{\sqrt{2}}\int \rmm{d}t\int \rmm{d}\theta\sum_j\qty{
    \qty(1-\alpha)B_j+ \im \alpha B_j B_{j+1}\Gamma_j} \\
 & =\int \rmm{d}t \sum_{j=1}^L \Biggl[
    \frac{1}{4} \beta_j \im \dot{\beta}_j+\frac{1}{4} \gamma_j \im \dot{\gamma}_j+\frac{1}{2} F_j^2+\frac{1}{2} G_j^2   \\
 &  \phantom{=}
 - F_j\qty{\qty(1-\alpha)+ \alpha 
     \qty(\im \beta_{j+1}\gamma_j-\im \beta_{j-1} \gamma_{j-1} )}-\alpha G_j \im \beta_j \beta_{j+1}
  \Biggr].
  \end{split}
  \label{eq:action_off}  
\end{equation}
Indeed, after eliminating the auxiliary fields in \eqref{eq:action_off} and performing the Legendre transformation, 
we recover the original Hamiltonian \eqref{eq:model} (up to a constant). The canonical commutation relations to be imposed on the Majorana fermions $\beta_j$, $\gamma_j$, and their conjugate momenta $p^{(\beta)}_j=\frac{\im}{4}\beta_j$, $p^{(\gamma)}_j=\frac{\im}{4}\gamma_j$ are:
\begin{align}
  &\acom{\beta_i}{p^{(\beta)}_j}=\acom{\gamma_i}{p^{(\gamma)}_j}=\frac{\im}{2} \delta_{i,j} \nt
  &\qty(=0 \text{ otherwise}) \; .
\end{align}

%%%%%%%%%%%%%%%%%%%%%%%%%%%%%%%%%%%%%%%%%%%%%%%%%%%%%%%%%

The equations of motion for the auxiliary fields, derived from \eqref{eq:action_off}, are given by:
\begin{equation}
\begin{split}
&F_j=\qty(1-\alpha)+ \alpha
\qty(\im \beta_{j+1}\gamma_j-\im \beta_{j-1} \gamma_{j-1})=\frac{1}{2} \acom{\beta_j}{R\qty(\alpha)} \\ 
&G_j=\alpha \im \beta_j \beta_{j+1}=\frac{1}{2} \acom{\gamma_j}{R\qty(\alpha)}  \; .
\end{split}
\label{eq:orderparameter}
\end{equation}
Here, $F_j$ and $G_j$ function are the order parameters for SUSY breaking \cite{wess1992supersymmetry}. 
This becomes evident when considering that in the absence of supersymmetry breaking, the ground state $\ket{0}_0$ of the system is a zero-energy eigenstate that annihilates under $R\qty(\alpha)$. Therefore, from \eqref{eq:orderparameter}, we have:
\begin{equation}
\begin{split}
&{}_0\!\bra{0}F_j\ket{0}_0=\frac{1}{2} {}_0\!\bra{0}\acom{\beta_j}{R\qty(\alpha)}\ket{0}_0=0  \\
&{}_0\!\bra{0}G_j\ket{0}_0=\frac{1}{2} {}_0\!\bra{0}\acom{\gamma_j}{R\qty(\alpha)}\ket{0}_0=0  \; .
\end{split}
\end{equation}
These considerations are relevant to finite systems and provide order parameters. However, in the case of infinite systems, the breaking of supersymmetry is defined with $e_{\text{g.s.}} > 0$, and it's important to note that SUSY unbroken does not necessarily imply the existence of the zero energy ground state $\ket{0}_0$. Therefore, these order parameters may lose their meaning in infinite systems. However, if these spatial averages are defined, they can still detect the breaking of supersymmetry. Detailed discussions on this matter are provided in Appendix \ref{App:orderineq}.
Furthermore, for supersymmetry doublets composed of states with positive energy $E>0$, the expectation values of the order parameters are the same 
(see Appendix~\ref{App:orderparameter}).

Therefore, retaining the auxiliary fields $F_j$ and $G_j$, and further assuming that they are uniform, we obtain the following mean-field Hamiltonian:
\begin{equation}
\label{eq:mfHam}
\begin{split}
H_{\text{mf}}\qty(\alpha;F,G) \coloneqq & \qty(\qty(1-\alpha)F-\frac{1}{2} F^2-\frac{1}{2} G^2)L  \\
& + \alpha F\sum_{j=1}^L \im \qty(\beta_{j+1}- \beta_{j}) \gamma_{j}+\alpha G\sum_{j=1}^{L}\im \beta_j \beta_{j+1}  \; .
\end{split}
\end{equation}
This Hamiltonian includes the order parameters $F$ and $G$ and can be interpreted as a mean-field theory to detect supersymmetry breaking.

%%%%%%%%%%%%%%%%%%%%%%%%%%%%%%%%%%%%%%%%%%%%%%%%%%%
\subsection{SUSY breaking in mean-field theory}
%%%%%%%%%%%%%%%%%%%%%%%%%%%%%%%%%%%%%%%%%%%%%%%%%%%
The energy density of the ground state, derived from \eqref{eq:mean-field-Edensity}, is given by:
\begin{equation}
\begin{split}
e^{\text{mf}}_{\text{g.s.}}\qty(\alpha;F,G) =& \qty(1-\alpha) F-\frac{1}{2}F^2-\frac{1}{2}G^2  \\
& - \frac{2\alpha}{\pi}\qty(\sqrt{F^2+G^2}+\frac{F^2}{\abs{G}}\sinh^{-1}\qty(\abs{\frac{G}{F}}))  \; .
\end{split}
\end{equation}
The values of $F$ and $G$ that realize the extremum of the energy density are given by:
\begin{subequations}
  \begin{align}
   \label{F-MF}
  & F\qty(\alpha)=\left\{
  \begin{aligned}
  &1-\frac{4+\pi}{\pi}\alpha &&  \qty(\alpha\leq \frac{\pi}{4+\pi}) \\
  &0 && \qty(\frac{\pi}{4+\pi} \leq \alpha)
  \end{aligned}
  \right.  \\
  \label{G-MF}
  & G\qty(\alpha)=0 \\
  \label{E-MF}
  &e^{\rmm{mf}}_{\text{g.s.}}\qty(\alpha)=
  \left\{
  \begin{aligned}
  & \hf \qty(1-\frac{4+\pi}{\pi}\alpha)^{2} &&  \qty(\alpha\leq \frac{\pi}{4+\pi}) \\
  &0 &&\qty(\frac{\pi}{4+\pi} \leq \alpha)  \; .
  \end{aligned}
  \right.    
  \end{align}
  \end{subequations}
Thus, the mean-field theory predicts the critical point $\alpha_\text{c}^\text{mf}=\frac{\pi}{4+\pi}=0.439900\ldots$  
(see Fig.~\ref{Fig-Maj-SUSY-meanfield-vs-num}).

%%%%%%%% FIG %%%%%%%%%%%%%%%%%%%%%%%%%%%%%%%%%%%%%%%%%%

\begin{figure}[htbp]
  \includegraphics[width=\columnwidth,clip]{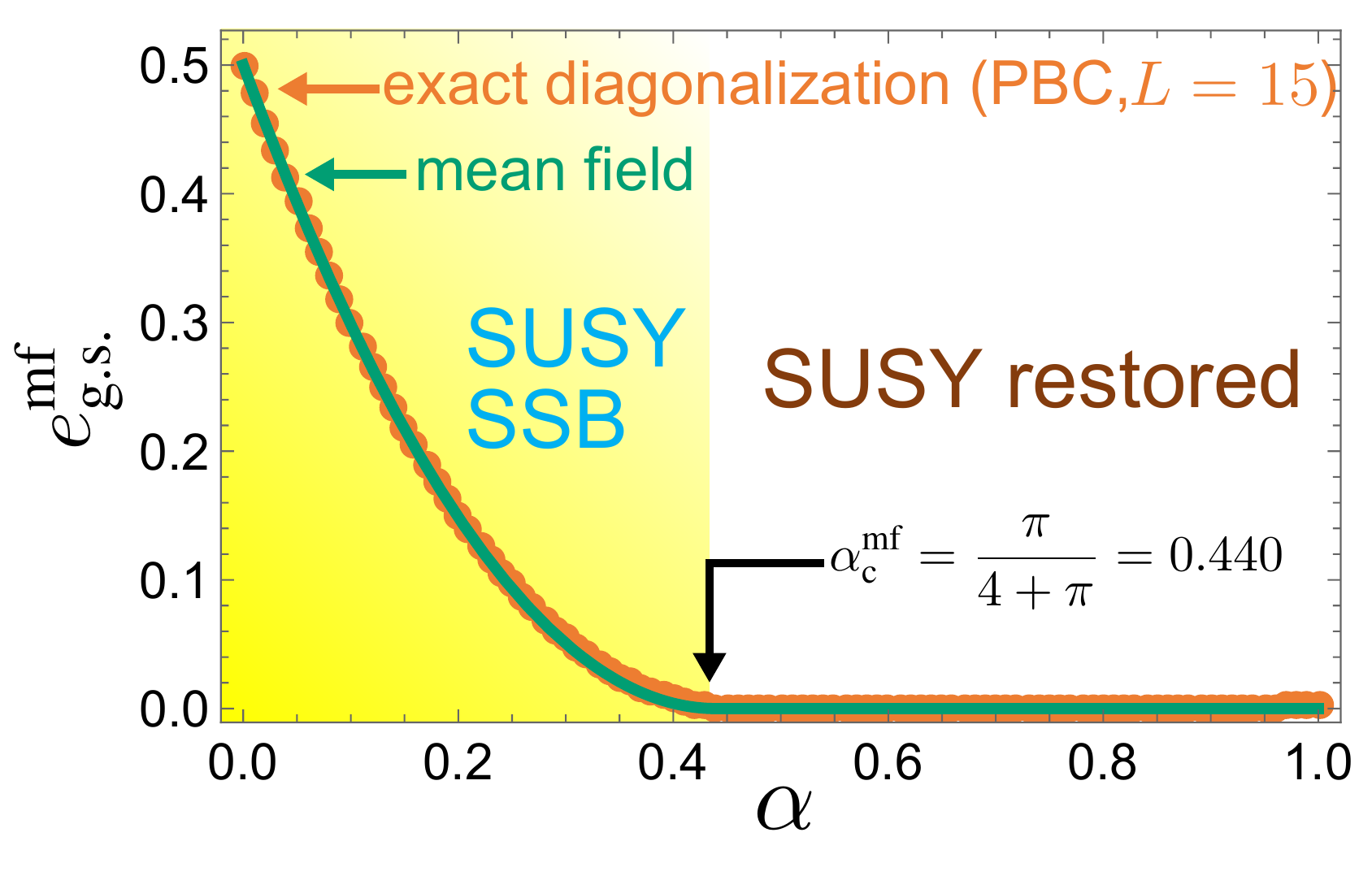}
  \caption{
The ground-state energy per site as a function of $\alpha$. The solid green curve represents the results of mean-field theory \eqref{E-MF}, while the orange dots depict the outcomes of exact diagonalization for a system with periodic boundary conditions and a size of $L=15$. Mean-field theory tends to slightly underestimate the actual values but is quite close to the results of exact diagonalization. Furthermore, the mean-field theory's transition point at $\alpha_{\text{c}}^{\text{mf}}=\frac{\pi}{4+\pi}$, within the scope of mean-field theory, is consistent with the results from Section \ref{sec:SUPERSYMMETRY BREAKING}, where $\frac{\pi}{8}\leq \alpha_{\text{c}}^{\text{mf}}=\frac{\pi}{4+\pi}\leq \hf$. 
  \label{Fig-Maj-SUSY-meanfield-vs-num}}
\end{figure}
%%%%%%%%%%%%%%%%%%%%%%%%%%%%%%%%%%%%%%%%%%%%%%%%%%%%%%%

We would like to comment on the meaning of our ``mean-field theory'' here. In the path-integral formulation of the standard mean-field theories (e.g., the BCS theory), auxiliary fields as the variational parameters ($F$ and $G$ here) are first introduced through the Hubbard-Stratonovich transformation. If we integrate out the fermion fields, we are left with the effective action of $F$ and $G$ which is to be integrated over these auxiliary fields to yield the final result. Mean-field approximation corresponds to trading the functional integral over $F$ and $G$ for the saddle point value (i.e., the extremum) of the effective action. Our treatment in Sec.~\ref{sec:maenfield} follows this strategy. 
Therefore, the optimal values \eqref{F-MF} and \eqref{G-MF} are not always {\em minimizing} the (mean-field) ground-state energy within the space of a specific variational wave function. 

%%%%%%%%%%%%%%%%%%%%%%%%%%%%%%%%%%%%%%%%%%%%%%%%%%%
\subsection{NG fermion in mean-field theory}
\label{sec:NG-fermion-in-MFA}
%%%%%%%%%%%%%%%%%%%%%%%%%%%%%%%%%%%%%%%%%%%%%%%%%%%
By substituting \eqref{F-MF} and \eqref{G-MF} into \eqref{eq:dipersionMF}, we see that there is a dispersive fermionic excitation 
in the SUSY-broken phase $0 \leq \alpha\leq \alpha_\text{c}^\text{mf}=\frac{\pi}{4+\pi}$
having the following single-particle spectrum: 
\begin{equation}
\label{eq:dispMF}
\begin{split}
&\eps_{\rmm{mf}}\qty(k)
=4\alpha\abs{F(\alpha)}\abs{\sin(\frac{k}{2})}
= \frac{4\alpha}{\alpha^\text{mf}_{\text{c}}} \qty(\alpha^\text{mf}_{\text{c}} - \alpha)\abs{\sin(\frac{k}{2})} \\
& \qty(-\pi < k \leq \pi) \; .
\end{split}
\end{equation}
For small values of $k$, it exhibits a gapless linear dispersion (Fig.~\ref{Fig-NG-dispersion}), 
whereas, in the SUSY-symmetric phase $\alpha\geq\frac{\pi}{4+\pi}$, $F(\alpha)=0$, and the fermionic excitation loses 
its dispersion. 
This behavior supports the hypothesis that this elementary excitation is the NG fermion and 
is also consistent with the long-wavelength behavior predicted by the single-mode approximation for the NG fermions:
\begin{align}
\epsilon_{\text{var}}(k) \leq \frac{1}{2}\sqrt{\frac{C}{E_{\text{g.s.}}/L}}\left| k \right| + O\left(\left| k \right|^3\right) 
\end{align}
(see Appendix~\ref{App:NG-single} for the derivation).   

%%%%%%%% FIG %%%%%%%%%%%%%%%%%%%%%%%%%%%%%%%%%%%%%%%%%%

\begin{figure}[htbp]
  \includegraphics[width=\columnwidth,clip]{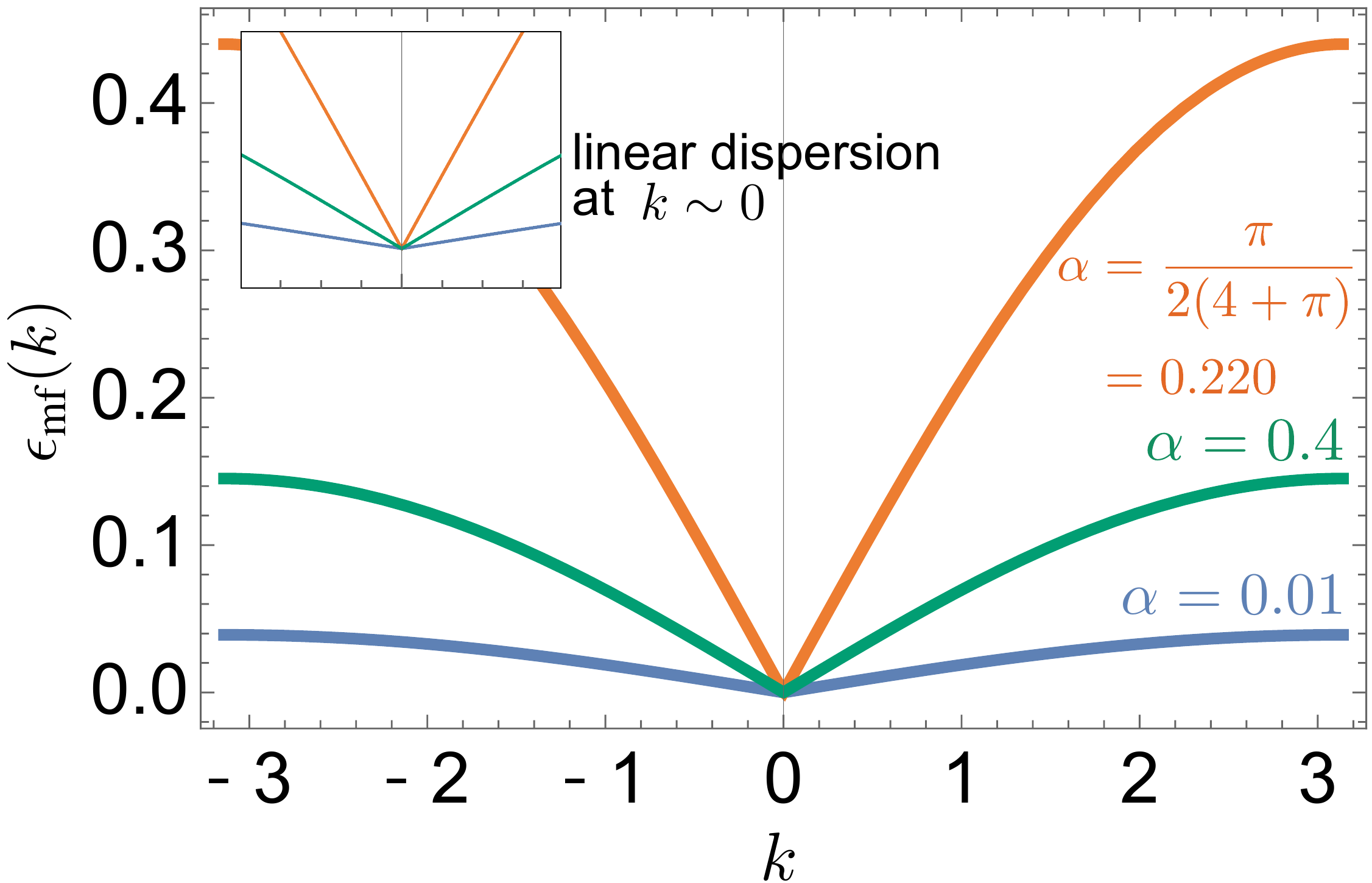}
  \caption{
  The mean-field spectrum \eqref{eq:dispMF} of the elementary excitation for several values of \(\alpha\).   
  When $\alpha$ is increased from $0$, the dispersion first becomes larger and takes its maximum at \(\alpha= \alpha_\text{c}^\text{mf}/2\).  Then, the fermion gradually loses its dispersion towards \(\alpha=\alpha_\text{c}^\text{mf} \) at which it becomes dispersionless again. 
 As the fermionic excitation has a finite dispersion only when SUSY is broken $\qty(0 \leq \alpha < \frac{\pi}{4+\pi})$, it can be identified as the NG fermion.
  \label{Fig-NG-dispersion}}
\end{figure}
%%%%%%%%%%%%%%%%%%%%%%%%%%%%%%%%%%%%%%%%%%%%%%%%%%%%%%%

%%%%%%%%%%%%%%%%%%%%%%%%%%%%%%%%%%%%%%%%%%%%%%%%%%%%%%%%%%
%%%%%%%%%%%%%%%%%%%%%%%%%%%%%%%%%%%%%%%%%%%%%%%%%%%%%%%%%%
\section{Kink and Skink as zero-energy defects}
\label{skinksection}
%%%%%%%%%%%%%%%%%%%%%%%%%%%%%%%%%%%%%%%%%%%%%%%%%%%%%%%%%%
%%%%%%%%%%%%%%%%%%%%%%%%%%%%%%%%%%%%%%%%%%%%%%%%%%%%%%%%%%%
Having obtained a rough picture of the phase structure and low-lying excitations, we will now proceed to closely investigate 
the nature of the ground states and low-energy spectrum.   
To keep the clarity of the argument and avoid the complexity coming from the non-local edge fermions, 
we will use below the spin-1/2 language introduced in Sec.~\ref{sec:spin-rep}, in which 
a pair of Majorana fermions at each physical site is mapped to a spin-1/2 at the same site by the Jordan-Wigner transformation (see Fig.~\ref{MajoChain}). 
Accordingly, the Hamiltonian is given by \eqref{eq:spin-rep-OBC} (for OBC) or \eqref{eq:spin-rep-PBC} (for PBC).

%%%%%%%%%%%%%%%%%%%%%%%%%%%%%%%%%%%%%%%%%%%%%%%%%%%%%%%%%%%%%%%%%%%%
\subsection{Numerical spectrum}
\label{sec:numerical-spec}
%%%%%%%%%%%%%%%%%%%%%%%%%%%%%%%%%%%%%%%%%%%%%%%%%%%%%%%%%%%%%%%
Numerical calculations for the spectrum of the Hamiltonian \eqref{eq:model} under the periodic and open boundary conditions 
revealed several interesting results. In Fig.~\ref{Fig-OBCgraphs}, we plot the lowest 50 eigenvalues obtained by exact diagonalizations 
for a system size $L=15$ system at various values of $\alpha$ ($=0.2$, $0.48$, $0.5$, $0.52$, $0.8$, and $1$).  
The data for the periodic (open) system are plotted by blue (orange) dots.   
The spectral structure for the smallest $\alpha=0.2$ [Fig.~\ref{Fig-OBCgraphs}(a)] can be easily understood 
by the spectrum of $\alpha\qty(1-\alpha)H_{\text{CTI}}$ since the interaction term $\alpha^2H_{\text{int}}$ is sufficiently small. 
For instance, the spectral degeneracies $2$ (ground state), $4$, $2$, $4$, $4$, ...(for PBC) and $2$ (ground state), $2$, $2$, $4$, $4$, ...(for OBC) of some lowest-lying levels are fully understood by 
those of the free massless Majorana fermion. This, together with the gapless $k$-linear spectrum of the NG fermion found in the mean-field analysis, strongly suggests that the small-$\alpha$ region is a critical phase belonging to the $c=1/2$ critical Ising (i.e., gapless Majorana) universality class.  

On the other hand, for $\alpha\approx \hf$, a clear large plateau structure is seen in the low-energy part of the spectrum for both PBC and OBC 
[Fig.~\ref{Fig-OBCgraphs}(b)-(d)], which was absent when $\alpha=0.2$.  
At $\alpha=0.48$, the left-hand side of the plateau is still a bit rounded [Fig.~\ref{Fig-OBCgraphs}(b)], while, at $\alpha=\hf$, 
it becomes completely flat (for OBC) indicating that 
there are $2L\,(=30)$ degenerate ground states with the energy $E=\frac{1}{8}$ [Fig.~\ref{Fig-OBCgraphs}(c)].   
At $\alpha=0.52$, now the right-hand side starts having a finite slope [Fig.~\ref{Fig-OBCgraphs}(d)], and at $\alpha=0.8$, 
the plateau structure is already obscured entirely [Fig.~\ref{Fig-OBCgraphs}(e)].   
Except at $\alpha =0, \hf$ and $1$, the ground state degeneracy is always $2$ both for PBC and for OBC regardless of 
whether SUSY is broken or not.  
This is just a natural consequence of the supersymmetric spectrum; 
as any positive-energy levels come in pairs, there must be an even number (two, in generic cases) of zero-energy states, if they exist (note that 
the entire Hilbert space is $2^{L}$-dimensional).  
Last, at $\alpha=1$, the spectrum exhibits extensive degeneracy in each energy level, which may be understood as 
the signature of ``superfrustration'' as discussed in Ref.~\cite{fendley2019free}.   

A remark is in order about SUSY SSB at $\alpha=1$. From the positive ground state energy in Fig.~\ref{Fig-OBCgraphs} (f), one may think that SUSY is broken again at $\alpha=1$. However,  considering the exact result that SUSY is restored at $\alpha=1$ under OBC \cite{fendley2019free} and the boundary-condition independence of the criterion \eqref{eqn:SSB-criterion} (see Appendix \ref{App:specvary}), it is suggested that, in an infinite-size system, SUSY SSB occurs neither for OBC nor for PBC. In fact, we numerically calculated the ground-state energy density $e_{\text{g.s.}}(\alpha=1)$ for increasing system sizes $L$ to confirm that it converges to zero algebraically, implying unbroken SUSY in the sense of the criterion \eqref{eqn:SSB-criterion}.

In summary, we have numerically observed that around $\alpha=\hf$ a plateau structure forms in the low-energy spectrum and 
that the ground-state degeneracy for OBC changes from $2$ (for generic $\alpha$) to $2L$ precisely at $\alpha=\hf$, 
whereas no such change is seen for PBC, except for $\alpha=0,1$.
In Sec.~\ref{skink}, we will show that this plateau structure can be attributed to the existence of SUSY doublets (``kink'' and ``skink'') of low-energy defects 
around $\alpha=\hf$.

%%%%%%%% FIG %%%%%%%%%%%%%%%%%%%%%%%%%%%%%%%%%%%%%%%%%%
\begin{figure}[htbp]
  \includegraphics[width=\columnwidth,clip]{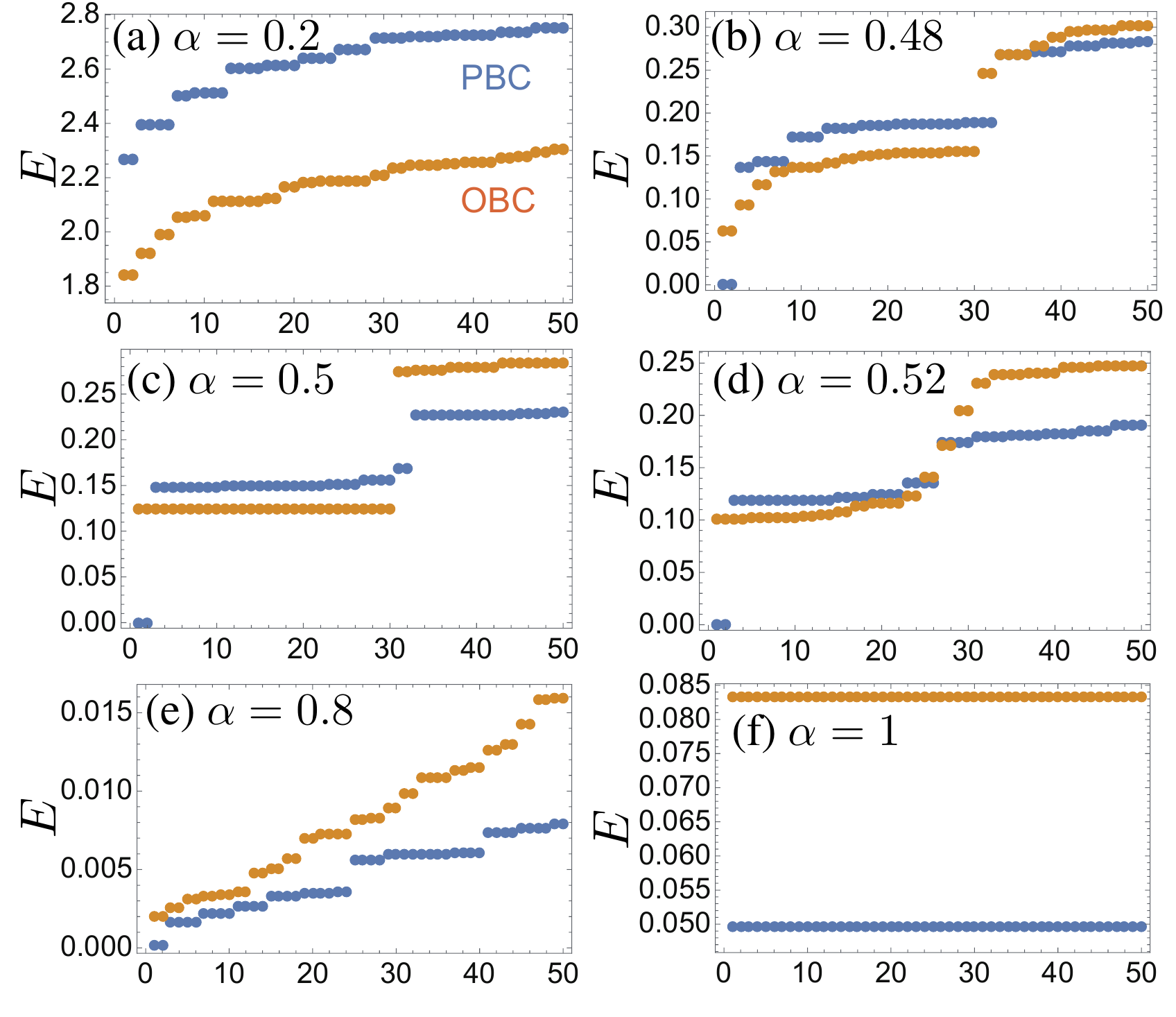}
  \caption{
The change in the spectrum of $H(\alpha)$ on finite ($L=15$) open (orange dots) and periodic (blue dots) chains when $\alpha$ is varied: 
(a) \(\alpha=0.2\), (b) \(0.48\), (c) \(0.5\), (d) \(0.52\), (e) \(0.8\), and (f) \(1\).   
Only the lowest 50 eigenvalues are shown.  
Around \(\alpha=1/2\), a noticeable flat structure emerges in the spectrum [(b), (c), and (d)].  
At \(\alpha=1/2\), it becomes completely flat for an open chain indicating that there are $2L\,(=30)$ degenerate ground states.  
When \(\alpha=0.8\), the flat structure is already obscured and the spectrum exhibits a qualitatively different structure.
Finally at \(\alpha=1\), the spectrum shows extensive degeneracy which is a hallmark of superfrustration that occurs at \(\alpha=1\) \cite{fendley2019free}. 
For generic values of $\alpha$, the ground states are two-fold degenerate for both PBC and OBC.}  

\label{Fig-OBCgraphs}
\end{figure}
%%%%%%%%%%%%%%%%%%%%%%%%%%%%%%%%%%%%%%%%%%%%%%%%%%%%%%%

%%%%%%%%%%%%%%%%%%%%%%%%%%%%%%%%%%%%%%%%%%%%%%%%%%%
\subsection{Kink and skink}
\label{skink}
%%%%%%%%%%%%%%%%%%%%%%%%%%%%%%%%%%%%%%%%%%%%%%%%%%%
In the previous section, we have numerically observed that the ground states of an {\em open} chain are $2L$-fold degenerate at $\alpha=\frac{1}{2}$, 
while  for a periodic chain, there are exactly two zero-energy ground states, which may be interpreted as the trivial phase (quantum paramagnetic phase in the spin language) and topological phases (ferromagnetic phase) of 
the Kitaev chain (transverse field Ising model) \cite{kitaev2001unpaired}.

Remarkably, the existence of these $O(L)$-degenerate 
{\em ground} 
states on an open chain can be understood by 
constructing a domain wall connecting the topological and trivial states found in a periodic chain. 
Specifically, in the Jordan-Wigner transformed spin language, it can be shown that the following states (Fig.~\ref{Fig-kink->skink<-}):
\begin{equation}
\begin{split}
&\ket{j,\migi} \coloneqq\ket{\dw\cdots \dw
\underset{j}{\up}
\migi \cdots\migi} ,  \\
&\ket{j,\leftarrow}\coloneqq\ket{\dw\cdots\dw
\underset{j}{\up}
\leftarrow\cdots\hdr}   \\
&\qty(j=1,\cdots,L-1),   \\
&\ket{\dw\cdots \dw\up} \, ,  \; \ket{\dw\cdots \dw\dw} \, (=\ket{\text{B}}) 
\end{split}
\label{eq:migi&hidari}
\end{equation}
are the $2L$-fold degenerate exact ground states with the $L$-{\em independent} eigenvalue $\frac{1}{8}$ at $\alpha=\hf$. 
Here, $\migi$ ($\leftarrow$) represents the eigenstate of the Pauli matrix $\sx$ with eigenvalue $+1$ ($-1$) and 
labels the two degenerate ferromagnetic states of the Ising chain.  
%%%
To see that these are all the ground states with the energy $\frac{1}{8}$, we note that  
the Hamiltonian takes the following simple form when $\alpha=\hf$:
\begin{subequations}
\begin{equation}
H\qty(\alpha=\hf) = \frac{1}{8} + \sum_{j=1}^{L-2} h_{j}   
\end{equation}
with the local Hamiltonian $h_{j}$ being a projection operator:
\begin{equation}
h_j\coloneqq\frac{1}{4}\qty(1+\sz_{j})\qty(1-\sx_{j+1}\sx_{j+2})\, (\geq 0)  \; .
\label{eqn:local-Ham-by-spin}
\end{equation}
\end{subequations}
Then, it is easy to check that each local Hamiltonian $h_{j}$ annihilates all the $2L$ states in Eq.~\eqref{eq:migi&hidari}.    
For example, the following ten orthogonal states span the exact ground-state subspace (with energy $\frac{1}{8}$) of an $L=5$ open chain at $\alpha=\hf$:
\begin{equation}
\begin{split}
& \ket{\up\migi\migi\migi\migi} \,(=\ket{1,\migi}), \;   \ket{\up\hdr\hdr\hdr\hdr} \,(=\ket{1,\hdr}), \\
& \ket{\dw\up\migi\migi\migi} \,(=\ket{2,\migi}), \;\; \ket{\dw\up\hdr\hdr\hdr} \,(=\ket{2,\hdr}) , \\
&\ket{\dw\dw\up\migi\migi} \,(=\ket{3,\migi}), \;\; \ket{\dw\dw\up\hdr\hdr} \,(=\ket{3,\hdr}), \\
& \ket{\dw\dw\dw\up\migi} \,(=\ket{4,\migi}),  \;\;  \ket{\dw\dw\dw\up\hdr} \,(=\ket{4,\hdr}), \\
& \ket{\dw\dw\dw\dw\up}, \;\;   \ket{\dw\dw\dw\dw\dw} \; .
  \end{split}
\end{equation}
Taking into account the double degeneracy ($\ket{\migi\migi\cdots}$ and $\ket{\hdr\hdr\cdots}$) in the ferromagnetic region (to the right of the domain wall) as well as the $L$ different choices of the domain wall's location, we see that these domain-wall ground states exhibit the desired $2L$-fold degeneracy.

Here we would like to comment on the ground-state energy $\frac{1}{8}$ of the above ``domain-wall'' states. By the analogy to the usual domain walls in ordered systems (like the ferromagnetic state in the Ising model), one may think that it corresponds to the local energy increase arising from the domain-wall formation. However, the constant $\frac{1}{8}$ is an artifact of the construction of the supersymmetric Hamiltonian \eqref{eq:model} and does not have any physical meaning. 
In fact, the homogeneous states $\ket{\text{A}}$ and $\ket{\text{B}}$ without domain walls (see Sec.~\ref{sec:SUSYrestoring}), which are equally eligible in the sense that they optimize {\em all} the local terms $h_{j}$ in the Hamiltonian, are also included in the above domain-wall subspace with energy $\frac{1}{8}$. Therefore, despite their positive energy, these domain-wall states are created at {\em zero} energy cost. 

It is interesting to rephrase the above in the original fermion language.   
To this end, it is essential to note that the pair of ``partially-ferromagnetic'' states $\ket{j,\migi}$ and $\ket{j,\leftarrow}$ 
are not the eigenstates of the fermion parity operator 
$\qty(-1)^F$.  In fact, using  Eq.~\eqref{eqn:F-parity-spin}, we have:
\begin{equation}
\begin{split}
&\qty(-1)^F\ket{j,\migi}=\qty(-1)^{L-j+1}\ket{j,\leftarrow}  \\
&\qty(-1)^F\ket{j,\leftarrow}=\qty(-1)^{L-j+1}\ket{j,\migi} \\
&\qty(-1)^F\ket{\dw\cdots \dw\up}=-\ket{\dw\cdots \dw\up} \\
&\qty(-1)^F\ket{\dw\cdots\dw\dw}=\ket{\dw\cdots \dw\dw}  \; .
\end{split}
\end{equation}
Hence, we see the following linear combinations:  
\begin{align}
&\ket{j,\rmm{kink}}\nt
&\coloneqq
\left\{
\begin{aligned}
&\frac{1}{\sqrt{2}}\qty(\ket{j,\hdr}-
\qty(-1)^{L-j+1}\ket{j,\migi}) && (j=1,\cdots,L-1) \\
&\ket{\dw\cdots \dw\up} && (j=L)
\end{aligned}
\right.
\nt
&\ket{j,\rmm{skink}}\nt
&\coloneqq
\left\{
\begin{aligned}
&\frac{1}{\sqrt{2}}\qty(\ket{j,\hdr}+\qty(-1)^{L-j+1}\ket{j,\migi}) && (j=1,\cdots,L-1) \\
&\ket{\dw\cdots \dw\dw} && (j=L)
\end{aligned}
\right.
\label{eq:kink&skink}
\end{align}
are the eigenstates of the fermion parity operator $\qty(-1)^F$:
\begin{equation}
\begin{split}
    &\qty(-1)^F\ket{j,\rmm{skink}}=+\ket{j,\rmm{skink}}, \\
    &\qty(-1)^F\ket{j,\rmm{kink}}=-\ket{j,\rmm{kink}}.
\end{split}
\end{equation}
As can be easily verified, these states transform into each other under the action of $R\qty(\alpha=\hf)$ and form a SUSY doublet: 
\begin{equation}
\begin{split}
& R\qty(\alpha=\hf)\ket{j,\rmm{skink}}=\hf\ket{j,\rmm{kink}}   \\
& R\qty(\alpha=\hf)\ket{j,\rmm{kink}}=\hf\ket{j,\rmm{skink}}  \; .
\end{split}
\end{equation}

Kinks in supersymmetric lattice fermion models are discussed in, e.g., Refs.~\cite{o2018lattice,minavr2022kink,wilhelm2023supersymmetry}.
The work in Ref.~\cite{o2018lattice} is particularly interesting for its discussion of domain walls connecting ordered and disordered states, similar to our observations based on the frustration-free properties. However, our model differs in that it possesses exact 
\(\mathcal{N}=1\) supersymmetry on a lattice. What is more important from a physical perspective is that our model does not require a finite energy cost to create a domain wall. This leads to the unique situation that both the uniform state and the state with a domain wall coexist as exact ground states, in contrast to the more conventional situation in Ref.~\cite{o2018lattice}.
%%%%%%%% FIG %%%%%%%%%%%%%%%%%%%%%%%%%%%%%%%%%%%%%%%%%%
\begin{figure}[htbp]
  \includegraphics[width=\columnwidth,clip]{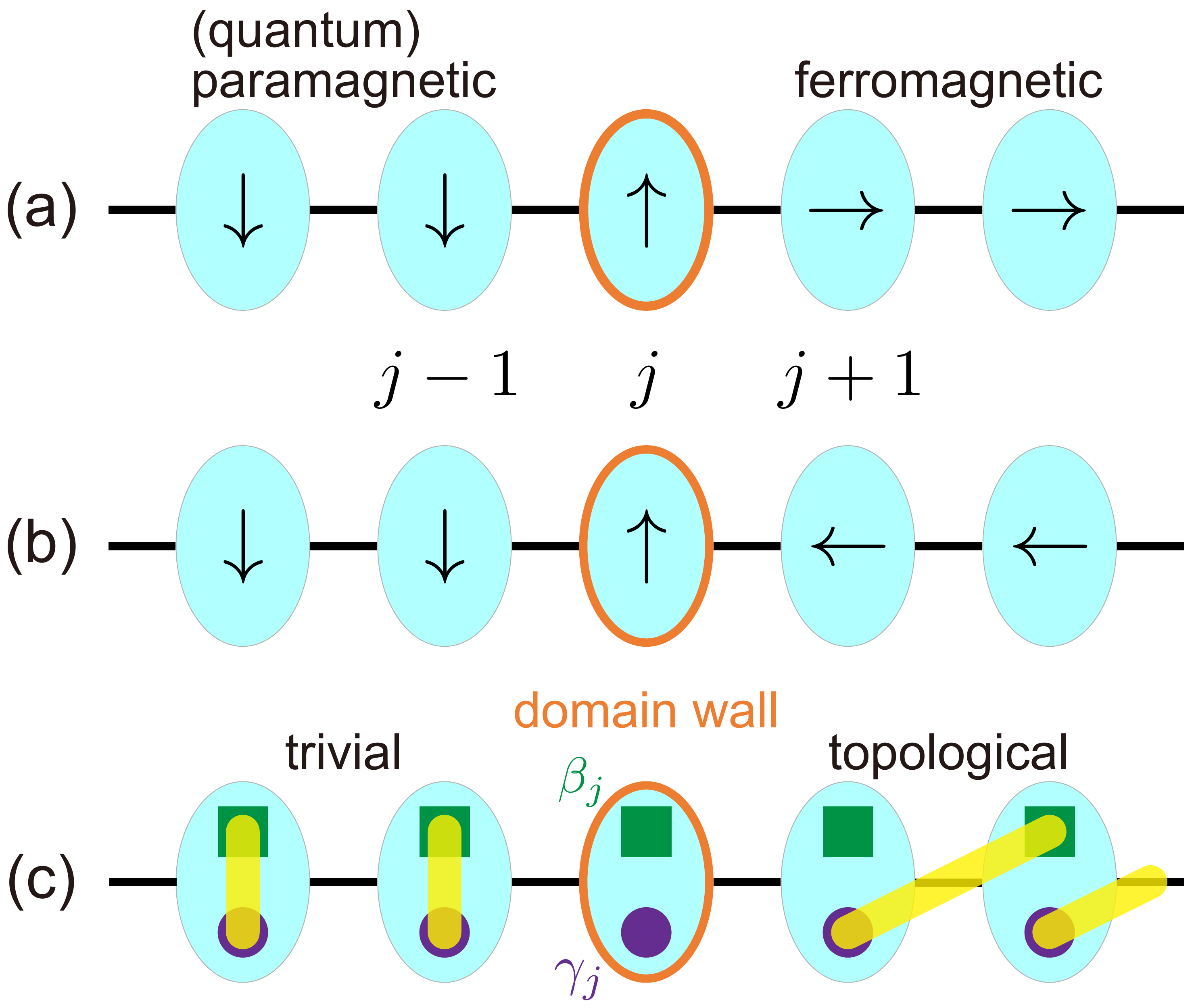}
  \caption{
 The top [(a)] and middle [(b)] panels respectively illustrate the spin states $\ket{j,\migi}$ and $\ket{j,\hdr}$ in \eqref{eq:migi&hidari}.   
The states $\ket{j,\text{kink}}$ and $\ket{j,\text{skink}}$ are given by linear superpositions of these.  
(c) In the language of Majorana fermions, these states may be interpreted as having the trivial and topological phases 
on the left and right-hand sides of the Kitaev chain, respectively, with a domain wall (the $\uparrow$-spin in the spin language) 
separating them.
  \label{Fig-kink->skink<-}}
\end{figure}
%%%%%%%%%%%%%%%%%%%%%%%%%%%%%%%%%%%%%%%%%%%%%%%%%%%%%%%

The order parameter \eqref{eq:orderparameter} at $\alpha=\hf$
\begin{equation}
F_i = \frac{1}{2} \left( 1 + \sigma_{i-1}^{z} - \sigma_{i}^{x} \sigma_{i+1}^{x} \right)  
\quad (i=2,\ldots, L-1)
\end{equation}
is non-zero only at the location of the kink, where the supersymmetry is broken locally (Fig.~\ref{Fig-fs}):
\begin{align}
  \bra{j,\rmm{(s)kink}}F_i\qty(\alpha=\hf)\ket{j,\rmm{(s)kink}}=\left\{
  \begin{aligned}
  &0 && \qty(i\neq j+1) \\
  &\hf && \qty(i=j+1)  \; .
  \end{aligned}
  \right.
\end{align}

%%%%%%%% FIG %%%%%%%%%%%%%%%%%%%%%%%%%%%%%%%%%%%%%%%%%%

\begin{figure}[htbp]
  \includegraphics[width=\columnwidth,clip]{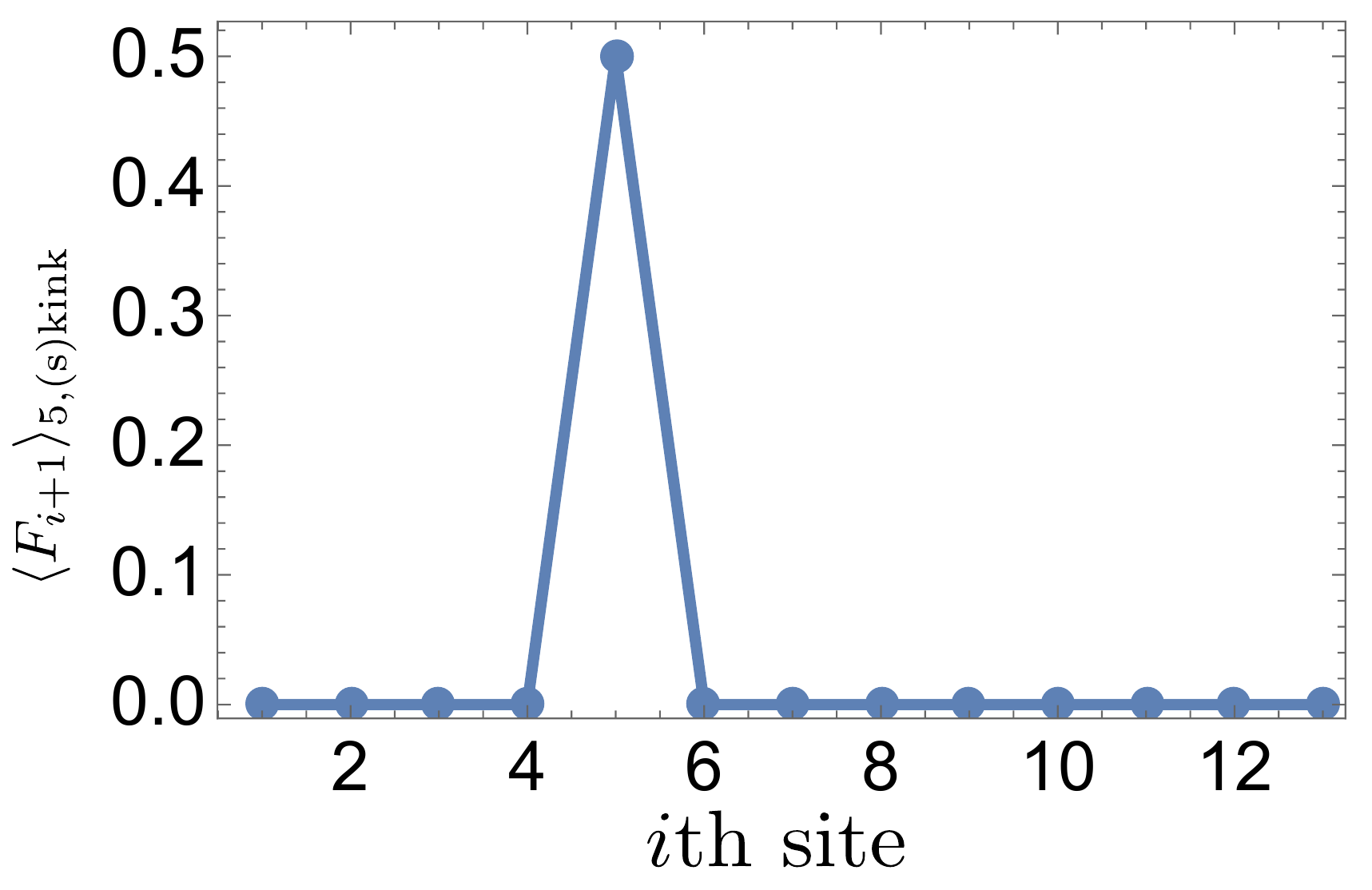}
\caption{The expectation value of the order parameter, $\kitaiti{F_{i+1}}_{j=5,\text{(s)kink}}:=\bra{5,\text{(s)kink}}F_{i+1} \ket{5,\text{(s)kink}}$, 
in a finite ($L=15$) open chain at $\alpha=\hf$.
The values at the boundaries ($i=0,14$) are undefined for OBC and are not included in the plot.  
In the region to the left of the kink (``trivial'', i.e., ``quantum paramagnetic'') and that to the right (``topological'', i.e., ``ferromagnetic''), SUSY is restored.  
SUSY is broken locally only at the location of the kink, which is detected by $\kitaiti{F_{i+1}}$ taking a non-zero value only at the kink's position.  
\label{Fig-fs}}
\end{figure}
%%%%%%%% FIG %%%%%%%%%%%%%%%%%%%%%%%%%%%%%%%%%%%%%%%%%%%

In certain supersymmetric lattice models, it is known that the number of the ground state grows exponentially with the system size, 
a phenomenon referred to as superfrustration \cite{huijse2008superfrustration,huijse2012supersymmetric,PhysRevLett.101.146406,PhysRevLett.95.046403}.  
Although the large ground-state degeneracy at $\alpha=\hf$ found in the previous section appears somewhat similar to superfrustration, 
the degeneracy is not exponential in the system size, but rather of $O(L)$.  
Moreover, this behavior depends crucially on the boundary condition; under PBC, the ground-state degeneracy is always two [i.e., $O(1)$] 
except at $\alpha=1$.   
This suggests a qualitative difference between the large degeneracy due to a domain wall and superfrustration. 

Lastly, it is worth mentioning that for both periodic and open boundary conditions, the model is frustration-free \cite{Garcia-V-W-C-07} 
when $\alpha=\hf$ in the sense that the ground state optimizes all the local Hamiltonians \eqref{eqn:local-Ham-by-spin} simultaneously 
(see, e.g., Refs.~\cite{tasaki2020physics,10.21468/SciPostPhysCore.4.4.027,o2018lattice} for other examples of frustration-free 
Hamiltonians) and that this enabled us to determine the ground states.

%%%%%%%%%%%%%%%%%%%%%%%%%%%%%%%%%%%%%%%%%%%%%%%%%%%
\subsection{Dispersion of kink and skink at $\alpha\neq\hf$}
%%%%%%%%%%%%%%%%%%%%%%%%%%%%%%%%%%%%%%%%%%%%%%%%%%%
In Sec.~\ref{skink}, we have found all the $2L$ degenerate ground states at $\alpha=\hf$ as the domain-wall (kink-skink) states 
which optimize the individual local Hamiltonians.  Being degenerate, these states represent an immobile domain wall.   
Then, one may ask if the domain wall acquires a finite dispersion when we move away from $\alpha=\hf$.  
In this section, we consider the effect of a small deviation $\Delta \alpha$ ($\alpha=\hf+\Delta \alpha$) on 
the domain-wall propagation using the first-order perturbation theory.   

We first expand the Hamiltonian $H^{\text{OBC}}\qty(\alpha=\hf+\Delta \alpha)$ up to first order in $\Delta \alpha$ as:
\begin{align}
H^{\text{OBC}}\qty(\alpha=\hf+\Delta \alpha)=H^{\text{OBC}}\qty(\hf)+ V \Delta \alpha +O\qty(\Delta \alpha^2) 
\end{align}
with $V$ being given by:
\begin{align}
V=H^{\text{OBC}}_{\text{int}}-\frac{L-2}{2}  \; .
\end{align} 
Since $V$ preserves the fermion parity, the matrix $V$ in the subspace spanned by the $2L$ kink-skink states in \eqref{eq:kink&skink} 
decomposes into two identical $L$-dimensional diagonal blocks corresponding to the kink and skink,  
whose matrix elements $\mathcal{V}_{i,j}\coloneqq\bra{i,\text{(s)kink}}V\ket{j,\text{(s)kink}}$ are given as follows: 
%%%%%%%%%%%%%%%%%%%%%%%%%%%%%%%%%%%%%%%
\begin{align}
\mathcal{V}_{i,j}=
\left\{
\begin{aligned}
&-\hf && \qty(i=j\leq L-2) \\
&\hf && \qty(i=j= L-1,L) \\
&0 && \qty(i=1 \text{ and } j\geq 2)\\
& && \text{ or } \qty(j=1 \text{ and } i\geq 2) \\
&-\qty(-\frac{1}{\sqrt{2}})^{\abs{i-j}-1} && \qty(i=L \text{ and } 2\leq j\leq L-1)\\
& &&\text{ or } \qty(j=L \text{ and } 2\leq i\leq L-1) \\
& \qty(-\frac{1}{\sqrt{2}})^{\abs{i-j}} && \qty(\text{otherwise})  \; .
\end{aligned}
\right.
\end{align}
%%%%%%%%%%%%%%%%%%%%%%%%%%%%%%%%%%%%%%%
For example, when $L=8$, the matrix $\mathcal{V}$ looks like:
\begin{equation}
\begin{split}
& \mathcal{V} = \\
&\begin{pmatrix}
-\hf & 0 & 0 & 0 & 0 & 0 & 0 & 0 \\
0 & -\hf & -\frac{1}{\sqrt{2}} & \frac{1}{2} & -\frac{1}{2\sqrt{2}} & \frac{1}{4} & -\frac{1}{4\sqrt{2}} & \frac{1}{4\sqrt{2}} \\
0 & -\frac{1}{\sqrt{2}} & -\hf & -\frac{1}{\sqrt{2}} & \frac{1}{2} & -\frac{1}{2\sqrt{2}} & \frac{1}{4} & -\frac{1}{4} \\
0 & \frac{1}{2} & -\frac{1}{\sqrt{2}} & -\hf & -\frac{1}{\sqrt{2}} & \frac{1}{2} & -\frac{1}{2\sqrt{2}} & \frac{1}{2\sqrt{2}} \\
0 & -\frac{1}{2\sqrt{2}} & \frac{1}{2} & -\frac{1}{\sqrt{2}} & -\hf & -\frac{1}{\sqrt{2}} & \frac{1}{2} & -\frac{1}{2} \\
0 & \frac{1}{4} & -\frac{1}{2\sqrt{2}} & \frac{1}{2} & -\frac{1}{\sqrt{2}} & -\hf & -\frac{1}{\sqrt{2}} & \frac{1}{\sqrt{2}} \\
0 & -\frac{1}{4\sqrt{2}} & \frac{1}{4} & -\frac{1}{2\sqrt{2}} & \frac{1}{2} & -\frac{1}{\sqrt{2}} & \hf & -1 \\
0 & \frac{1}{4\sqrt{2}} & -\frac{1}{4} & \frac{1}{2\sqrt{2}} & -\frac{1}{2} & \frac{1}{\sqrt{2}} & -1 & \hf \\
\end{pmatrix} 
\; .
\end{split}
\end{equation}

%%%%%%%%%%%%%%%%%%%%%%%%%%%%%%%%%%%%%%%%%%%%%%%%%%%%
We denote the eigenvalues of the matrix \( \mathcal{V} \) by \( \{ v_{n} \} \) (where \( n = 1, \ldots, L \)), with which the (s)kink energies are given as: 
\[ E_{n} = \frac{1}{8} + v_{n} \Delta \alpha + O(\Delta \alpha^2). \]
While we do not possess closed analytical expressions for \( \{ v_{n} \} \) for arbitrary system sizes \( L \), we have a good reason to believe that for sufficiently large \( L \), specifically when 
\[ L \gg -\frac{1}{\ln(\abs{-1/\sqrt{2}})} = 2.88539\ldots, \]
the eigenvalues \( \{ v_{n} \} \) can be well approximated by those of the following matrix:
\[ [\mathcal{V}^{\infty}]_{i,j} = \qty(-\frac{1}{\sqrt{2}})^{\abs{i-j}} - \frac{3}{2} \delta_{i,j} \quad (i,j \in \mathbb{N}), \]
obtained by retaining only the bulk part \( [\mathcal{V}]_{ij} \) (for \( 2 \leq i,j \leq L-1 \)), substituting \( [\mathcal{V}]_{L-1,L-1} = \frac{1}{2} \) with \( -\frac{1}{2} \), and then taking the limit \( L \to \infty \) (see Fig.~\ref{Fig-kinkspec} for the comparison of $v_{n}$ and the spectrum of $\mathcal{V}^{\infty}$).
%%%%%%%% FIG %%%%%%%%%%%%%%%%%%%%%%%%%%%%%%%%%%%%%%%%%%
\begin{figure}[htbp]
  \includegraphics[width=\columnwidth,clip]{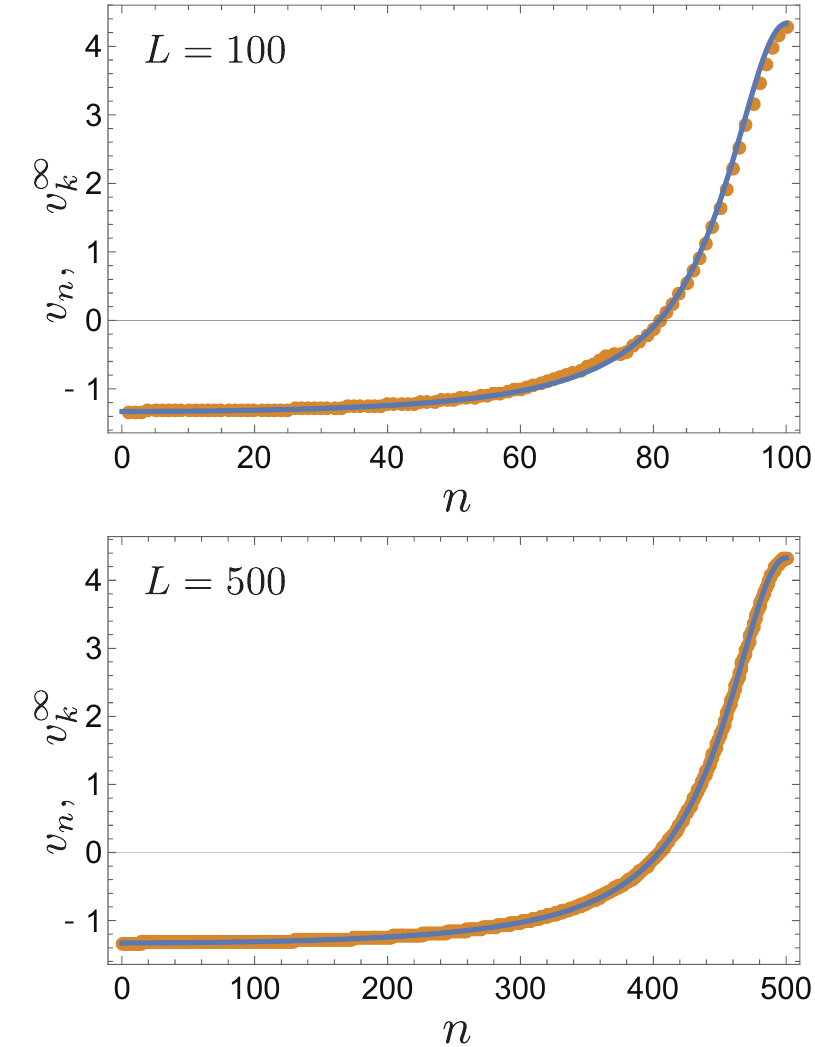}
  \caption{The orange dots displays the eigenvalues $v_{n}$ of $\mathcal{V}$ for system sizes $L=100$ (top) and $500$ (bottom).   
  The blue curve represents the values of $v_k^\infty$ (the data are joined).  
  While there is still a slight discrepancy between the two for $L=100$, they coincide almost perfectly for $L=500$.  
\label{Fig-kinkspec}}
\end{figure}
%%%%%%%% FIG %%%%%%%%%%%%%%%%%%%%%%%%%%%%%%%%%%%%%%%%%%%
Fourier-transforming \( \mathcal{V}^{\infty} \) and performing the summation of a geometric series yields:
\[ v_k^{\infty} = \frac{1}{3+2\sqrt{2}\cos(k)}-\frac{3}{2}, \]
where \( -\pi < k \leq \pi \).
%%%%%%%%%%%%%%%%%%%%%%%%%%%%%%%%%%%%%%%%%%%%%%%%%%%%

%%%%%%%% FIG %%%%%%%%%%%%%%%%%%%%%%%%%%%%%%%%%%%%%%%%%%
\begin{figure}[htbp]
  \includegraphics[width=\columnwidth,clip]{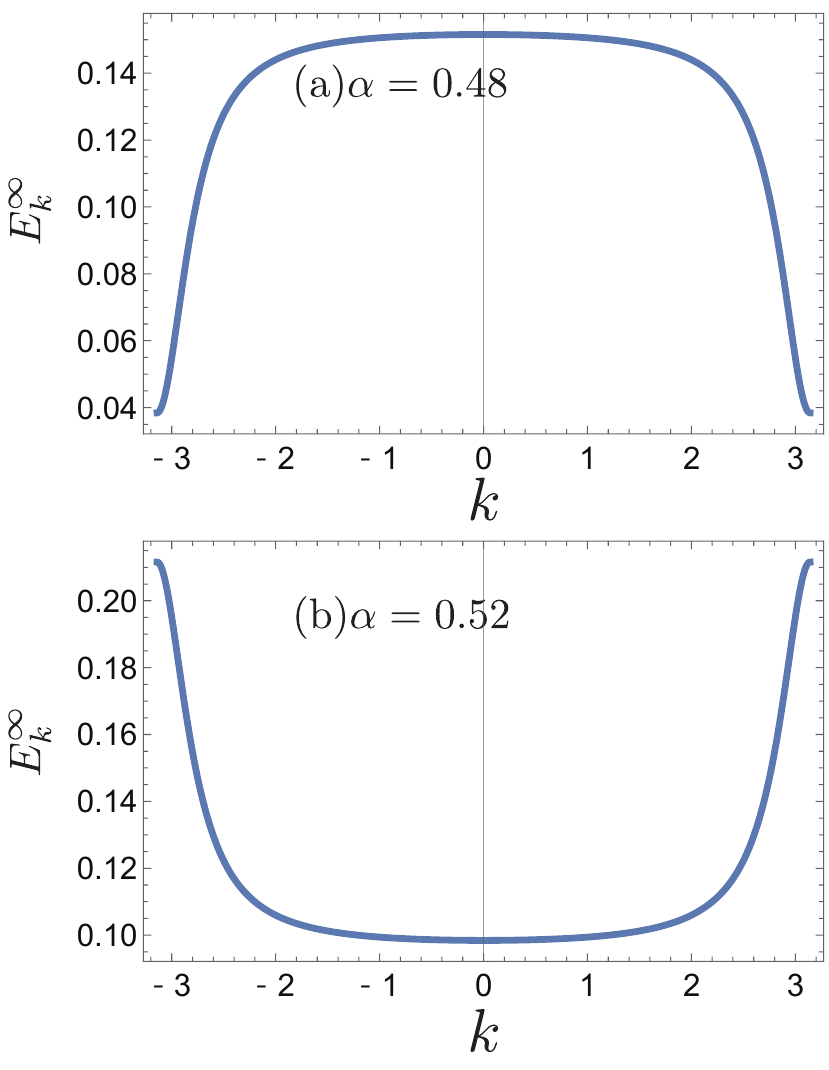}
  \caption{
 The eigenvalues $E_k$ are given by \eqref{eq:Ek}, where, in the thermodynamic limit, $E_k\coloneqq \frac{1}{8}+\Delta \alpha v_k^{\infty}+O\qty(\Delta \alpha^2)$. These eigenvalues are plotted at values of $\Delta\alpha$ being $\pm0.02$, corresponding to $\alpha=0.48$ and $0.52$. They exhibit symmetry in $k$-space, and around $k\sim 0$, the dispersion is approximately of the form $\text{const}+\text{const }k^2$. Therefore, based on the results of this perturbation theory, the system appears to be gapless around $\alpha\approx\hf$.
\label{Fig-kinkdispersion}}
\end{figure}
%%%%%%%% FIG %%%%%%%%%%%%%%%%%%%%%%%%%%%%%%%%%%%%%%%%%%%

Therefore, in the thermodynamic limit, the (s)kink spectrum of a propagating (s)kink
\begin{equation}
\ket{\Psi\qty(k),\text{(s)kink}}  \coloneqq \sum_{i=1}^{\infty} \eno{-\im k i}\ket{i,\text{(s)kink}},
\end{equation}
where the normalization is
\begin{equation}
\begin{split}
    &\bra{i,\text{(s)kink}}\ket{j,\text{(s)kink}}=\delta_{i,j}\\
    &\bra{\Psi\qty(k),\text{(s)kink}}\ket{\Psi\qty(p),\text{(s)kink}}=2\pi\delta\qty(k-p)\\
    &\bra{\Psi\qty(k),\text{(s)kink}}\ket{i,\text{(s)kink}}=\eno{\im ki},
\end{split}
\end{equation}
is given by:
\begin{equation}
\begin{split}
E_k&=\frac{1}{8}+ v_k^{\infty} \Delta \alpha +O\qty(\Delta \alpha^2)   \\
&=\frac{1}{8}+\Delta \alpha \qty(\frac{1}{3+2\sqrt{2}\cos(k)}-\frac{3}{2})+O\qty(\Delta \alpha^2) \; .
\end{split}
\label{eq:Ek}
\end{equation}
From this, we see that if we move away from $\alpha=\hf$, the $2L$-fold ground-state degeneracy is lifted down to two which is  the plane-wave states $\ket{k,\text{(s)kink}}$ with $k=0$ (when $\Delta \alpha >0$) 
or $\pi$ ($\Delta \alpha < 0$). 
Furthermore, the spectrum $E_k$, plotted in Fig.~\ref{Fig-kinkdispersion}, is consistent with the ``rounded'' plateau structure observed in Figs.~\ref{Fig-OBCgraphs} (b) and (d), suggesting the importance of a kink or skink in the low-energy physics. 

In the SUSY-broken phase $\alpha < \alpha_{\text{c}}$, especially for very small $\alpha\ll1$, 
the dominance of the $\alpha\qty(1-\alpha)H_{\text{CTI}}$ term and the mean-field approximation \eqref{eq:dispMF} suggest 
that the dispersion relation (under PBC) behaves like $\eps_k\sim k$ at small momenta.
For $\alpha\approx \hf$, on the other hand, by expanding the perturbative result \eqref{eq:Ek}, we see that the model shows a gapless quadratic spectrum 
$E_k- E_{\text{g.s.}} \sim k^2+O\qty(\Delta \alpha^2)$ over the ground states.   
When $\alpha=1$, as reported in \cite{fendley2019free}, the single-particle spectrum exhibits $\eps_k\sim \abs{\pi-k}^{\frac{3}{2}}$. 
Combining the results obtained in Sec.~\ref{sec:NG-fermion-in-MFA} and here, we conclude 
that the system remains gapless at least for $\alpha\approx \hf$ and $\alpha=1$, as well as for $\alpha\leq \alpha_{\text{c}}$ (due to the NG fermion).

%%%%%%%%%%%%%%%%%%%%%%%%%%%%%%%%%%%%%%%%%%%%%%%%%%%%%%%%%%%%
\section{Conclusion}
\label{sec:conclusion}
%%%%%%%%%%%%%%%%%%%%%%%%%%%%%%%%%%%%%%%%%%%%%%%%%%%%%%%%%%%%
In conclusion, we have introduced and thoroughly investigated a one-dimensional model of interacting Majorana fermions with exact $\mathcal{N}=1$ supersymmetry that includes the Kitaev chain and the model studied in Ref.~\cite{fendley2019free} as the non-interacting and interaction parts. By applying the Jordan-Wigner transformation to this Majorana chain, it can be represented as a transverse field Ising model where three spins interact. When the control parameter $\alpha$ is increased, the system undergoes a phase transition from the phase with broken SUSY to the SUSY-symmetric one. By combining a variational argument and the exact ground state obtained at the frustration-free point $\alpha=\hf$, we have concluded that the transition occurs in the interval $\frac{\pi}{8} < \alpha_\text{c} < \hf$. 

We have then developed a mean-field theory based on the superfield formalism. The theory contains a pair of order parameters $F$ and $G$ which detect SUSY-breaking. The behavior of the ground-state energy density (which is also an order parameter for SUSY SSB) and the phase structure predicted by the theory are consistent with those expected from numerical analyses. Within the mean-field theory, it is straightforward to obtain the spectrum of the NG fermion which is anticipated in the SUSY-broken phase.  We have found that it is gapless $k$-linear and that the characteristic velocity vanishes at the boundaries of the SUSY-broken phase. Combining this with the results of numerical (Sec.~\ref{sec:numerical-spec}) and analytical (Appendix~\ref{App:NG-single}) calculations, we expect that a critical phase of the Ising universality class ($c=\hf$) extends over the SUSY-broken region $0 \leq \alpha < \alpha_{\text{c}}$, in which the massless Majorana fermion plays the role of the NG fermion. 

The property of the model becoming frustration-free at $\alpha=\hf$ is common to both PBC and OBC and allows us to obtain interesting exact results. For PBC, we have shown that the topological and trivial phases of the Kitaev chain degenerate to form a pair of zero-energy ground states. On the other hand, for OBC, the system allows an immobile zero-energy domain wall ({\em kink} or its superpartner {\em skink}) between the above topological and trivial phases to appear in the ground state, thereby leading to degeneracy proportional to the system size. 

To see how this ground-state degeneracy is resolved when we move away from the frustration-free point, 
we have carried out first-order perturbation theory in $\alpha-\hf$. The behavior of the degenerate ground states in the thermodynamic limit has been precisely determined which suggests that the degeneracy is resolved away from the frustration point leaving only the two-fold one required by supersymmetry. As seen in Fig.~\ref{Fig-kinkdispersion}, when $\alpha \neq \hf$, the $2L$ degenerate ground states form a gapless spectrum (behaving like $\sim k^2$) over the (doubly-degenerate) ground state. Our perturbative calculations based on the kink-skink picture well reproduce the numerically observed spectra suggesting the importance of these domain-wall excitations in understanding the low-energy properties of the supersymmetric model. 

Through analytical and numerical analyses presented here, the properties in the region \( \alpha \leq \alpha_\text{c} \) have been relatively well understood. However, whereas the region $\hf<\alpha < 1$ remains largely unexplored. In particular, little is known about the low-lying excitation spectrum in that region, which is crucial to a full understanding of the crossover from the kink-skink physics around $\alpha \approx \hf$ to the superfrustration behavior at $\alpha=1$. Investigating low-energy properties with extensive numerical simulations would be an interesting subject for future projects. 
Yet another essential open problem is uncovering the nature of the SUSY-restoring transition at  $\alpha_{\text{c}}$. In this respect, combining large-scale numerical simulations and the techniques of conformal field theories will be very useful.

%%%%%%%%%%%%%%%%%%%%%%%%%%%%%%%%%%%%%%%%%%%%%%%%%%%%%%%%%%%%%%
\section*{Acknowledgements}
%%%%%%%%%%%%%%%%%%%%%%%%%%%%%%%%%%%%%%%%%%%%%%%%%%%%%%%%%%%%%%
The authors would like to thank T.~Ando, and Y.~Nakayama, for helpful discussions, and especially H.~Katsura for discussions and 
letting us know an unpublished thesis \cite{sannomiyaDron} where some of the results 
in Secs.~\ref{sec:Model} and \ref{sec:SUPERSYMMETRY BREAKING} have been obtained independently. 
The author (KT) is supported in part by Japan Society for the Promotion of Science (JSPS) KAKENHI Grant No. 21K03401 and the IRP project ``Exotic Quantum Matter in Multicomponent Systems (EXQMS)'' from 
CNRS.
%%%%%%%%%%%%%%%%%%%%%%%%%%%%%%%%%%%%%%%%%%%%%%%%%%%%%%%%%%%%%%%%%%%%%%%%%
\appendix
%%%%%%%%%%%%%%%%%%%%%%%%%%%%%%%%%%%%%%%%%%%%%%%%%%%%%%%%%%%%%%%%%%%%%%%%%
%%%%%%%%%%%%%%%%%%%%%%%%%%%%%%%%%%%%%%%%%%%%%%%%%%%%%%%%%%%%%%%%%%%%%%%%%
\section{Diagonalization of $H_{\rmm{CTI}}$ 
and its ground-state energy}
\label{App:free}
%%%%%%%%%%%%%%%%%%%%%%%%%%%%%%%%%%%%%%%%%%%%%%%%%%%%%%%%%%%%%%%%%%%%%%%%%
In the small \( \alpha \) limit, the first significant term arises from the free part \( \alpha(1-\alpha)H_{\text{CTI}} \). This appendix is devoted to diagonalizing this free part of the Hamiltonian.

To achieve this, we employ the discrete Fourier transform defined as:
\begin{align}
  \beta_j &= \sqrt{\frac{2}{L}}\sum_{k>0}\left(\eno{-\im k j}b_k+\eno{\im k j}b_k^\dagger\right) \\
  \gamma_j &= \sqrt{\frac{2}{L}}\sum_{k>0}\left(\eno{-\im k j}c_k+\eno{\im k j}c_k^\dagger\right) .
\end{align}
With this transformation, the \( H_{\text{CTI}} \) can be expressed in the following form:
\begin{align}
  H_{\text{CTI}}
  &= 2\sum_{k>0} 
  \begin{pmatrix}
      b_k^\dagger &
      c_k^\dagger
  \end{pmatrix}
  \begin{pmatrix}
      0 & \im \qty(\eno{\im k}-1) \\
      -\im \qty(\eno{-\im k}-1) & 0
  \end{pmatrix}
  \begin{pmatrix}
      b_k \\ 
      c_k
  \end{pmatrix} \\
  &= \sum_{k} 4\abs{\sin(\frac{k}{2})} \left(f_k^\dagger f_k - \frac{1}{2}\right).
\end{align}
Here, \( f_k \) represents the quasiparticle fermion operator.
The ground state of \( H_{\text{CTI}} \) corresponds to the vacuum state of the quasiparticle \( f_k \). The energy associated with this state is represented by the zero-point energy, which can be determined as:
\begin{align}
\label{eq:Egs-density}
  e_0^{\text{CTI}} = -\int_{-\pi}^{\pi} \frac{dk}{2\pi} 2\abs{\sin(\frac{k}{2})} = -\frac{4}{\pi}.
\end{align}

%%%%%%%%%%%%%%%%%%%%%%%%%%%%%%%%%%%%%%%%%%%%%%%%%%%%%%%%%%%%%%%%%%%%%%%%%
\section{Diagonalization of $H_{\rmm{mf}}$ and its ground-state energy}
\label{App:mf}
%%%%%%%%%%%%%%%%%%%%%%%%%%%%%%%%%%%%%%%%%%%%%%%%%%%%%%%%%%%%%%%%%%%%%%%%%
To find the ground state energy using the mean-field approximation and discuss SUSY breaking, we need to diagonalize the mean-field Hamiltonian \eqref{eq:mfHam}.
%%%%%%%%%%%%%%%
\begin{widetext}
By applying the discrete Fourier transform 
\begin{equation}
\beta_j=\sqrt{\frac{2}{L}}\sum_{k>0}\qty(\eno{-\im k j}b_k+\eno{\im k j}\hc{b}_k) \; , \quad 
\gamma_j=\sqrt{\frac{2}{L}}\sum_{k>0}\qty(\eno{-\im k j}c_k+\eno{\im k j}\hc{c}_k),
\end{equation}
we can rewrite $H_{\rmm{mf}}$ as:
\begin{equation}
\label{eq:dipersionMF}
\begin{split}
H_{\rmm{mf}} =&  2\alpha\sum_{k>0} 
  \begin{pmatrix} 
    \hc{b}_k & \hc{c}_k 
  \end{pmatrix}
  \begin{pmatrix}
      2G\sin(k) & \im F(\eno{\im k}-1) \\
      -\im F(\eno{-\im k}-1) & 0
  \end{pmatrix}
  \begin{pmatrix}
      b_k \\ 
      c_k
  \end{pmatrix} 
  -2\alpha G\sum_{k>0}\sin(k) + \qty(\qty(1-\alpha)F-\frac{1}{2}F^2-\frac{1}{2}G^2)L   \\
 =&  \sum_{k}  \qty(4\alpha\abs{\sin(\frac{k}{2})}\sqrt{F^2+G^2\cos^2\qty(\frac{k}{2})}
  +2\alpha G\sin(k))
  \hc{f}_k f_k 
  -\sum_{k}2\alpha\abs{\sin(\frac{k}{2})}
  \sqrt{F^2+G^2\cos^2\qty(\frac{k}{2})}\nt
&  +\qty(\qty(1-\alpha)F-\hf F^2-\hf G^2)L  \; , 
\end{split}
\end{equation}
where $f_k$ represents the quasiparticle fermion operator.

The ground state of $H_{\rmm{mf}}$ is the vacuum of the quasiparticle $f_k$, its energy is given by  
the zero-point energy of 
\begin{align}
  \label{eq:mean-field-Edensity}
  e^{\rmm{mf}}_0\qty(\alpha;F,G)=&-\int^{\pi}_{-\pi}\frac{\rmm{d}k}{2\pi}
  2\alpha\abs{\sin(\frac{k}{2})}
  \sqrt{F^2+G^2\cos^2\qty(\frac{k}{2})} +\qty(1-\alpha)F-\hf F^2-\hf G^2 \nt
  =&\qty(1-\alpha)F-\hf F^2-\hf G^2 -\frac{2\alpha}{\pi}\qty(
  \sqrt{F^2+G^2}+\frac{F^2}{\abs{G}}
  \sinh^{-1}\qty(\abs{\frac{G}{F}})
  ).
\end{align}

\end{widetext}
%%%%%%%%%%%%%%%%%%%%%%%%%

%%%%%%%%%%%%%%%%%%%%%%%%%%%%%%%%%%%%%%%%%%%%%%%%%%%%%%%%%%%%%%%%%%%%%%%%%
\section{Order parameters in the infinite size systems}
\label{App:orderineq}
%%%%%%%%%%%%%%%%%%%%%%%%%%%%%%%%%%%%%%%%%%%%%%%%%%%%%%%%%%%%%%%%%%%%%%%%%
In this section, in an infinite system, we consider $\ket{0}$ as one of the ground states of the system, and define the space-averaged order parameters as follows:

\begin{align}
\kitaiti{F}:&=\frac{1}{L}\sum_{j=1}^L\bra{0}F_j\ket{0} \nt
\kitaiti{G}:&=\frac{1}{L}\sum_{j=1}^L\bra{0}G_j\ket{0}.
\end{align}

These order parameters play a role in detecting the spontaneous breaking of SUSY ($e_o^{\rmm{gs}}>0$) in the infinite system.
The calculations and logic used here are based on \cite{moriya2018supersymmetry}.

Let us introduce $x_j$ as a fermionic local operator that satisfies the following conditions:

\begin{align}
  \label{eq:conditionX}
  &\acom{\qty(-1)^F}{x_j}=0 \nt
  &\hc{x}_j=x_j \nt
  &\bra{0} \acom{x_i}{x_j} \ket{0}=
  \left\{
\begin{aligned}
&O\qty(1) && \qty(\abs{i-j}\leq r) \\
&0 && \qty(\abs{i-j}>r).
\end{aligned}
\right.
\end{align}

Here, \( x_j \) denotes a fermionic local operator, while \( X \) represents its spatial average:

\begin{align}
X\coloneqq\frac{1}{L}\sum_{j=1}^L x_j,
\end{align}
where the parameter $r$ represents the interaction range ($O(1)$ natural number). The Hermitian condition is imposed for simplicity, but the same discussion can be done without it (the $x_j$ in the main text, corresponding to $\beta_j$ and $\gamma_j$, are Hermitian).

Given this, we define \( \Phi_j \) and \( \kitaiti{\Phi} \) as:
\begin{align}
&\Phi_j\coloneqq\acom{R}{x_j} \nt
&\kitaiti{\Phi}\coloneqq\frac{1}{L}\sum_{j=1}^L\bra{0}\Phi_j\ket{0}.
\end{align}
Then, we have:
\begin{align}
&\kitaiti{\Phi}=\bra{0}\acom{R}{X}\ket{0}.
\end{align}
Similar to the case of \( F_j \) and \( G_j \) in Section \ref{sec:maenfield}, these quantities serve as order parameters in finite systems.
The main question is whether the detection of SUSY SSB $e_o^{\rmm{gs}}>0$ remains valid in the infinite system.

Assuming SUSY unbroken, we have $\hc{R}=R$, and thus:

\begin{align}
0&\leftarrow e_o^{\rmm{gs}} =\frac{\bra{0}H\ket{0}}{L}=\frac{\bra{0}R^2\ket{0}}{2L}
=\frac{\norm{R\ket{0}}^2}{2L}.
\end{align}
This implies:
\begin{align}
\label{eq:normR}
\norm{R\ket{0}}=o\qty(\sqrt{L}).
\end{align}

From \eqref{eq:conditionX}, we find:
\begin{align}
\norm{X\ket{0}}^2&=\bra{0}X^2\ket{0}\nt
&=\frac{1}{2L^2}\sum_{i,j}\bra{0} \acom{x_i}{x_j} \ket{0}\nt
&=\frac{1}{2L^2}O\qty(L) =O\qty(1/L).
\end{align}
This leads to:
\begin{align}
\label{eq:normX}
\norm{X\ket{0}}=O\qty(1/\sqrt{L})
\end{align}

Now, using the triangle inequality, the Cauchy-Schwarz inequality, and the Hermiticity of $R$ and $X$, we find:
\begin{align}
\abs{\kitaiti{\Phi}}&=\abs{\bra{0}\qty(RX+XR)\ket{0}} \nt
&\leq2\norm{R\ket{0}}\norm{X\ket{0}} \nt
&=2o\qty(\sqrt{L})O\qty(1/\sqrt{L}) \nt
&=o\qty(1).
\end{align}
Thus, if we take the contrapositive, we can say that if $\abs{\kitaiti{\Phi}}\geq O\qty(1)$, then SUSY SSB is occurring in the infinite system. Therefore, the order parameter constructed in this way, which represents the spatial average of $\Phi_j=\acom{R}{x_j}$, can indeed detect SUSY SSB in the infinite system when it becomes nonzero.

%%%%%%%%%%%%%%%%%%%%%%%%%%%%%%%%%%%%%%%%%%%%%%%%%%%%%%%%%%%%%%%%%%%%%%%%%
\section{Expectation value of order parameters}
\label{App:orderparameter}
%%%%%%%%%%%%%%%%%%%%%%%%%%%%%%%%%%%%%%%%%%%%%%%%%%%%%%%%%%%%%%%%%%%%%%%%%
In this section, we demonstrate that the expectation value of the order parameter remains the same for the doublet with energy $E>0$ under supersymmetry. We adopt the notation from Appendix \ref{App:orderineq}.

For the doublet, we denote the states with fermion parity $+1$ as $\ket{E,+}$ and those with fermion parity $-1$ as $\ket{E,-}$. Due to the properties of the operator \( R \), the following relations hold:

\begin{align}
&R\ket{E,+}=\sqrt{2E}\ket{E,-} \nt
&R\ket{E,-}=\sqrt{2E}\ket{E,+}.
\end{align}

Since $\hc{R}=R$, we have:

\begin{align}
\bra{E,+}\Phi\ket{E,+}&=\frac{1}{2E}\bra{E,-}\hc{R}\Phi R\ket{E,-} \nt
&=\frac{1}{2E}\bra{E,-}R \acom{R}{X} R\ket{E,-} \nt
&=\frac{1}{2E}\bra{E,-}\qty(2HXR+RX2H) \ket{E,-} \nt
&=\bra{E,-} \acom{X}{R} \ket{E,-} \nt
&=\bra{E,-}\Phi\ket{E,-}.
\end{align}

Thus, we have demonstrated that the expectation value of the order parameter is the same for both $\ket{E,+}$ and $\ket{E,-}$ states. Therefore, when determining the value of the order parameter, one can choose either state for calculation as long as proper normalization is performed.

%%%%%%%%%%%%%%%%%%%%%%%%%%%%%%%%%%%%%%%%%%%%%%%%%%%%%%%%%%%%%%%%%%%%%%%%%
\section{Ground state energy variation with boundary conditions}
\label{App:specvary}
%%%%%%%%%%%%%%%%%%%%%%%%%%%%%%%%%%%%%%%%%%%%%%%%%%%%%%%%%%%%%%%%%%%%%%%%%
In this section, we prove that when the difference in the Hamiltonian due to different boundary conditions is of order $O(1)$ with respect to the system size $L$, so is the difference in the ground-state energy. This result allows us to determine the existence/absence of SUSY SSB in the infinite system without being affected by the boundary conditions.

Consider \( H_1 \) and \( H_2 \) as the Hamiltonians corresponding to boundary conditions 1 and 2, respectively. 
They differ only by at most $O(1)$ terms associated to, e.g., those near the boundary. Therefore, in terms of the operator norm, we have
\begin{align}
  \label{eq:opsub}
  \norm{H_2 - H_1} = O(1).
\end{align}
We assume that each local Hamiltonian has an operator norm of order \( O(1) \).

Let $\ket{1}$ and $\ket{2}$ be the ground states of $H_1$ and $H_2$, respectively, and let $E_1$ and $E_2$ be the corresponding energies. Without loss of generality, we may assume $E_2 \geq E_1$. We want to show that the difference $E_2 - E_1$ is of order $O(1)$. 
By the variational principle and the definition of the operator norm, we have
\begin{align}
  0 \leq E_2 - E_1 &= \bra{2} H_2 \ket{2} - \bra{1} H_1 \ket{1} \nt
  &\leq \bra{1} H_2 \ket{1} - \bra{1} H_1 \ket{1} \nt
  &= \bra{1} (H_2 - H_1) \ket{1} \nt
  &\leq \norm{H_2 - H_1},
\end{align}
which, combined with Eq.~\eqref{eq:opsub}, implies \( E_2 - E_1 = O(1) \).
Therefore, we can conclude that the criterion \eqref{eqn:SSB-criterion} of SUSY SSB in the infinite system leads to the same conclusion regardless of the boundary conditions.

%%%%%%%%%%%%%%%%%%%%%%%%%%%%%%%%%%%%%%%%%%%%%%%%%%%%%%%%%%%%%%%%%%%%%%%%%
\section{Single-mode approximation for NG fermion}
\label{App:NG-single}
%%%%%%%%%%%%%%%%%%%%%%%%%%%%%%%%%%%%%%%%%%%%%%%%%%%%%%%%%%%%%%%%%%%%%%%%%

Using a variational argument, we can establish an upper bound for the dispersion of the NG fermions, thus confirming the existence of gapless excitations. In the realm of conventional broken symmetries, the variational state for the corresponding NG bosons is generated by applying the operators associated with the broken symmetries to the ground state (referred to as the single-mode approximation \cite{PhysRev.94.262,horsch1988spin,PhysRevB.49.6710,momoi1994upper}). Similarly, we can construct a trial state for the NG fermions by substituting the bosonic generators with the fermionic supercharge.
Specifically, following Refs.~\cite{sannomiya2016supersymmetry,sannomiya2017supersymmetry,sannomiya2019supersymmetry}, 
we consider the local supercharge operator
\begin{align}
r_i \coloneqq \frac{1-\alpha}{2}\qty(\beta_i+\beta_{i+1})+\alpha\im \beta_i\beta_{i+1}\gamma_i 
\end{align}
and introduce its momentum-$p$ component as:
\begin{align}
R_p \coloneqq \sum_{j =1}^L \eno{\im p j}r_j =\hc{R}_{-p}.
\end{align}
The trial variational state with momentum \( p \) is then defined as:
\begin{align}
\ket{\psi_p} \coloneqq \frac{R_p\ket{0}}{\norm{R_p\ket{0}}}\quad\qty(p\neq 0),
\end{align}
where $\ket{0}$ is the normalized ground state.  
The variational energy for this trial state \( \ket{\psi_p} \) is given
with the help of the same inequalities as 
in Refs.~\cite{sannomiya2016supersymmetry,sannomiya2017supersymmetry,sannomiya2019supersymmetry}.
Here, using the identity that follows from the inversion symmetry $\mathcal{I}$ or the time-reversal symmetry $\mathscr{T}$ in \ref{subSec:symmetry} possessed by the system:
\begin{align}
    &\mathcal{I}R_p\hc{\mathcal{I}}=\eno{-\im p}\hc{R}_p \nt
    &\mathscr{T}R_p\hc{\mathscr{T}}=\hc{R}_p \\
    &\bra{0}R_p\hc{R}_p\ket{0}=\bra{0}\hc{R}_pR_p\ket{0} \nt
    &\bra{0}R_p H \hc{R}_p\ket{0}=\bra{0}\hc{R}_p H R_p\ket{0},
\end{align}
we obtain:
\begin{align}
\eps_{\rmm{var}}\qty(p) :&= \bra{\psi_p}H\ket{\psi_p}-E_{\text{g.s.}} \nt
&= \frac{\bra{0}\com{\hc{R}_p}{{\com{H}{R_p}}}\ket{0}}
{\bra{0}\acom{\hc{R}_p}{R_p}\ket{0}} \nt
&\leq \sqrt{\frac{\bra{0}\acom{\com{\hc{R}_p}{H}}{\com{H}{R_p}}\ket{0}}
{\bra{0}\acom{\hc{R}_p}{R_p}\ket{0}}} \nt
&= \hf\sqrt{\frac{C}{E_{\text{g.s.}}/L}}\abs{p}+O\qty(\abs{p}^3),
\end{align}
where $C$ is a positive constant independent of $L$, and $E_{\text{g.s.}}/L \, (\gneq 0)$ is the ground-state-energy density.
Using a supersymmetric current \( j_j \), which satisfies the following relation
\begin{align}
    \com{\im H}{r_j}=j_{j}-j_{j+1},
\end{align}
the value of \( C \) can be expressed:
\begin{align}
    C=\frac{1}{L}\sum_{i,j}\bra{0}\acom{j_i}{j_j}\ket{0}
\end{align}
Here, the expression for the current is given by
\begin{align}
    &j_j \nonumber \\
    &=\alpha\qty(1-\alpha)^2\qty(\gamma_{j-1}+\gamma_{j}) \nonumber \\
    &-\alpha^2\qty(1-\alpha)\left(
    \im \beta_{j-2}\gamma_{j-2}\gamma_{j-1}
    +2\im \beta_{j-1}\gamma_{j-1}\gamma_{j} \right.\nonumber \\
    &\left.+2\im \beta_{j+1}\gamma_{j-1}\gamma_{j}+\im \beta_{j+2}\gamma_{j}\gamma_{j+1}
    \right) \nonumber \\
    &+2\alpha^3\qty(
    \beta_{j-2}\beta_{j+1}\gamma_{j-2}\gamma_{j-1}\gamma_{j}
    +\beta_{j-1}\beta_{j+2}\gamma_{j-1}\gamma_{j}\gamma_{j+1}
    ) \; .
\end{align}
This variational energy provides an upper bound on the true dispersion of the Nambu-Goldstone mode.
This fact can be obtained straightforwardly by extending the sum rule in bosonic systems \cite{PhysRev.94.262,horsch1988spin,PhysRevB.49.6710,momoi1994upper} to fermionic systems.
Specifically, the spectrum of the NG fermions is bounded from above by a $p$-linear dispersion.
Here, it should be noted that this variational approach is not effective when the trial state $\ket{\psi_p}$ coincides with other orthogonal ground states $\ket{0}$.
In periodic boundary conditions (PBC), we have numerically confirmed that the ground state degeneracy is 2, except for the cases when $\alpha=0,1$. In these cases, there is no concern about the trial state being identical to one of the ground states.

%%%%%%%%%%%%%%%%%%%%%%%%%%%%%%%%%%%%%%%%%%%%%%%%%%%%%%%%%%%%%%%%%%%%%%
%\section{Ground states for \texorpdfstring{$\alpha=\hf$}{TEXT}}
\section{Ground states for $\alpha=\hf$}
\label{sec:proof-2-GS}
%%%%%%%%%%%%%%%%%%%%%%%%%%%%%%%%%%%%%%%%%%%%%%%%%%%%%%%%%%%%%%%%%%%%%%
In this section, we prove the following two theorems about ground states of the Hamiltonian $H\qty(\alpha=\hf)$.
\begin{thm}\label{thm:PBC}
    The ground-state energy of the Hamiltonian $H\qty(\alpha=\hf)$ under the periodic boundary condition is zero, 
    and the eigenspace $\mathcal{E}_{\mathrm{PBC}}$ spanned by the ground states is of the form
    \begin{align}\label{eq:thmPBC}
        \mathcal{E}_{\mathrm{PBC}}
        \coloneqq \ker H\qty(\alpha=\hf)
        = \mathbb{C}\ket{\mathrm{A}}+\mathbb{C}\ket{\mathrm{B}},
    \end{align}
    where 
   \begin{align}
        \ket{\mathrm{A}}\coloneqq \frac{\ket{\rightarrow}^{\otimes L}-(-1)^L\ket{\leftarrow}^{\otimes L}}{\sqrt{2}},
        \quad
        \ket{\mathrm{B}}\coloneqq\ket{\downarrow}^{\otimes L}.
    \end{align}
    In particular, $\dim\mathcal{E}_{\mathrm{PBC}}=2$.
\end{thm}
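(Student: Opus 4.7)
Parts of the statement are immediate from Sec.~\ref{sec:SUSYrestoring}: both $\ket{\text{A}}$ and $\ket{\text{B}}$ are annihilated by $R^{\text{PBC}}\qty(\hf)$ and hence by $H\qty(\hf)=\hf R\qty(\hf)^2\ge 0$, and they are linearly independent because $\FP\ket{\text{A}}=-\ket{\text{A}}$ while $\FP\ket{\text{B}}=+\ket{\text{B}}$. Thus $E_{\text{g.s.}}=0$ and $\dim \mathcal{E}_{\text{PBC}}\ge 2$ are already in hand, and the real content is the upper bound $\dim \mathcal{E}_{\text{PBC}}\le 2$.

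The plan is to first rewrite $H^{\text{PBC}}\qty(\hf)$ as an explicit sum of commuting local projectors and then exploit the fermion-parity decomposition. A direct manipulation starting from \eqref{eqn:H_0_PBC}--\eqref{eqn:H_int_PBC} yields
\begin{equation*}
H^{\text{PBC}}\qty(\hf)=\sum_{j=1}^{L} h_j ,\quad
h_j\coloneqq \tfrac{1}{4}\qty(1-\im\beta_j\gamma_j)\qty(1+\im \beta_{j+2}\gamma_{j+1})
\end{equation*}
with indices modulo $L$; the two factors of each $h_j$ involve disjoint Majoranas and hence commute, so $h_j$ is an orthogonal projector and $\ker H\qty(\hf)=\bigcap_{j}\ker h_j$. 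Because $\acom{R\qty(\hf)}{\FP}=0$ and $R$ is Hermitian, rank-nullity applied to $R|_{\mathcal{H}_+}$ (whose adjoint is $R|_{\mathcal{H}_-}$) forces $\dim\qty(\ker R\cap\mathcal{H}_+)=\dim\qty(\ker R\cap\mathcal{H}_-)$, so it is enough to prove $\dim\qty(\ker H\qty(\hf)\cap\mathcal{H}_+)\le 1$.

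I would carry this out in the Jordan-Wigner spin basis \eqref{eqn:JW-tr}: the projectors $h_j$ for $j\neq L-1$ all take the form $\tfrac{1}{4}\qty(1+\sz_j)\qty(1-\sx_{j+1}\sx_{j+2})$ (indices mod $L$), whereas the single wrapping projector $h_{L-1}$, after replacing $\im\beta_1\gamma_L$ by its Jordan-Wigner image $\sx_L\sx_1\FP$, becomes $\tfrac{1}{4}\qty(1+\sz_{L-1})\qty(1+\sx_L\sx_1)$ on $\mathcal{H}_+$—with a crucial sign flip relative to the bulk form. Expanding $\ket{\psi}=\sum_{\mathbf{n}}c_{\mathbf{n}}\ket{\mathbf{n}}$ in the $\sz$-basis, each condition $h_j\ket{\psi}=0$ becomes a linear identification $c_{\mathbf{n}}=\pm c_{\mathbf{n}^{(j+1,j+2)}}$ (plus in the bulk, minus for the wrapping term), imposed whenever the $j$-th spin of $\mathbf{n}$ is $\up$. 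The all-$\dw$ configuration $\ket{\text{B}}$ carries no constraint at all and forms an isolated vertex of the resulting flip graph.

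The decisive step, which I expect to be the main obstacle, is to show that every other even-parity configuration lies in a connected component of this flip graph that supports a closed loop traversing the wrapping edge an odd number of times: along such a loop the sign-flipped constraint forces $c_{\mathbf{n}}=-c_{\mathbf{n}}$ and hence $c_{\mathbf{n}}=0$. I plan to prove this by induction on the number of $\up$-spins, showing that bulk moves can always reduce a configuration to one with two adjacent $\up$'s positioned so that the wrapping move at $j=L-1$ is applicable, and then exhibiting a length-four cycle (a bulk move composed with the wrapping move) carrying a net $-1$—exactly as appears in the $L=4$ case. The parallel analysis in $\mathcal{H}_-$ (periodic spin chain, all signs $+$) is easier: connectivity alone forces all $c_{\mathbf{n}}$ to be equal, reproducing $\ket{\text{A}}$ as the unique zero mode in that sector.
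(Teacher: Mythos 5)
Your preparatory steps are all sound, and they take a genuinely different route from the paper's. The identity $H^{\mathrm{PBC}}\qty(\hf)=\sum_{j=1}^{L}\tfrac14\qty(1-\im\beta_j\gamma_j)\qty(1+\im\beta_{j+2}\gamma_{j+1})$ does follow from \eqref{eqn:H_0_PBC}--\eqref{eqn:H_int_PBC}, each summand is a projector, your Jordan--Wigner images (including the $(-1)^F$-dependent sign on the single wrapping bond) are correct, and the rank--nullity pairing between the parity sectors legitimately reduces the problem to showing $\dim\qty(\ker H\qty(\hf)\cap\mathcal H_+)\le 1$ (it also makes your $\mathcal H_-$ discussion redundant). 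For comparison, the paper never touches the configuration graph: Lemma \ref{lem:generic_jl} shows that in the common kernel of the bulk projectors an up spin at site $j$ forces $\sigma^x_{\ell+1}\sigma^x_{\ell+2}=1$ for \emph{all} $\ell\ge j$, which pins the bulk kernel down to the $2L$-dimensional kink space $\mathcal G_L$ by induction on $L$ (Lemma \ref{lem:all_elements_of_G_L}); the two remaining boundary projectors then cut that small space down to $\mathbb C\ket{\mathrm A}+\mathbb C\ket{\mathrm B}$.

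The genuine gap is exactly the step you label ``the main obstacle'': you must prove that every connected component of the even-sector flip graph, other than the isolated all-$\dw$ vertex, contains a cycle traversing the wrapping edge an odd number of times, and the plan you sketch does not deliver this. Two concrete problems. First, the induction on the number of $\up$ spins is not well-founded: a bulk move at $j$ flips spins $j+1,j+2$ and changes the up-count by $0$ or $\pm2$; strictly decreasing it requires three consecutive $\up$'s, which sparse configurations such as $\{1,3\}$ or $\{1,3,5,7\}$ do not have and cannot be given without first increasing the count. Moreover, since every move only flips spins to the right of an existing $\up$, steering an arbitrary configuration into one with $\sz_{L-1}=\up$ (without which the wrapping edge is never incident to your component) is itself a nontrivial reachability claim that must be proved, not assumed. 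Second, the ``length-four cycle (a bulk move composed with the wrapping move)'' is ambiguous, and its most natural reading fails: a commuting square built from one bulk edge and the wrapping edge traverses the wrapping edge twice and has sign product $(+)(-)(+)(-)=+1$, i.e.\ it is \emph{not} frustrated. The frustrated $4$-cycle that actually exists for $L=4$, e.g.\ all-$\up\;\to\{2,3\}\to\{2,4\}\to\{1,4\}\to$ all-$\up$, uses the wrapping edge once together with three distinct bulk edges, and exhibiting an analogue of it inside every component for general $L$ is precisely the combinatorial content left unproved. Until that lemma is supplied, the bound $\dim\qty(\ker H\qty(\hf)\cap\mathcal H_+)\le1$, and hence the theorem, is not established.
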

\begin{thm}\label{thm:OBC}
    The ground-state energy of the Hamiltonian $H\qty(\alpha=\hf)$ under the open boundary condition in \eqref{eqn:H^OBC} is $\frac{1}{8}$, 
    and the eigenspace $\mathcal{E}_{\mathrm{OBC}}$ spanned by the ground states is of the form 
    \begin{align*}
        &\mathcal{E}_{\mathrm{OBC}}
        \coloneqq \ker\left( H\qty(\alpha=\hf) -\frac{1}{8}\right) \\
        &= \sum_{j=1}^{L-2}\left(\mathbb{C}\ket{j,\rightarrow}_L+\mathbb{C}\ket{j,\leftarrow}_L\right)
        + \ket{\downarrow}^{\otimes (L-2)}\otimes(\mathbb{C}^2)^{\otimes 2}.
    \end{align*}
    In particular, $\dim\mathcal{E}_{\mathrm{OBC}}=2L$.
\end{thm}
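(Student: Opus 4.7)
The plan rests on the frustration-free decomposition already displayed in Sec.~\ref{skink}: at $\alpha=\hf$ one has $H^{\text{OBC}}(\hf) = \tfrac{1}{8} + \sum_{j=1}^{L-2} h_j$ with $h_j = \tfrac{1}{4}(1+\sz_j)(1-\sx_{j+1}\sx_{j+2})$. Because the two factors of $h_j$ are commuting projectors supported on disjoint sites, $h_j$ is itself an orthogonal projector, and in particular $h_j \geq 0$. This yields the lower bound $E_{\text{g.s.}} \geq \tfrac{1}{8}$ for free, with equality attained precisely on $V_L \coloneqq \bigcap_{j=1}^{L-2} \ker h_j$. The theorem then reduces to identifying $V_L$ with the explicit $2L$-dimensional subspace on the right-hand side of the claim.

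The first step I would carry out is the easy containment: for each of the listed states, every $h_j$ annihilates it by a simple dichotomy---either site $j$ already hosts a $\dw$-spin (killing the factor $(1+\sz_j)$), or sites $j+1, j+2$ sit in a common $\sx$-eigenstate (killing $(1-\sx_{j+1}\sx_{j+2})$). Linear independence of the $2L$ listed states follows at once from inspection of their configurations in $\sz_1,\ldots,\sz_{L-1}$, giving $\dim V_L \geq 2L$.

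The main content is the matching upper bound $\dim V_L \leq 2L$, which I would prove by induction on $L$ (the base case $L=2$, with no $h_j$'s present, is trivial). For the inductive step I would decompose $V_L = \ket{\dw}_1 \otimes V_L^{(\dw)} \;\oplus\; \ket{\up}_1 \otimes V_L^{(\up)}$ according to the $\sz_1$ eigenvalue; this is legitimate because every $h_j$ with $j\geq 2$ acts trivially on site $1$. In the $\dw_1$ sector $h_1$ is automatic, and the surviving constraints recover the OBC Hamiltonian on a chain of length $L-1$ over sites $2,\ldots,L$, yielding $\dim V_L^{(\dw)} = 2(L-1)$ by induction. It then suffices to show $\dim V_L^{(\up)} = 2$.

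This last step is the only delicate point and the main obstacle. In the $\up_1$ sector the constraint $h_1 = 0$ enforces $\sx_2\sx_3=+1$, so any element admits the decomposition $\ket{\chi} = \ket{\migi\migi}_{2,3}\ket{\chi_+} + \ket{\hdr\hdr}_{2,3}\ket{\chi_-}$, with $\ket{\chi_\pm}$ a priori arbitrary, possibly entangled, states on sites $4,\ldots,L$; the challenge is to propagate the $x$-alignment through these entangled tails rather than just through product states. I would apply $h_2$: the identity $(1+\sz_2)\ket{\migi}_2 = (1+\sz_2)\ket{\hdr}_2 = \sqrt{2}\ket{\up}_2$ collapses the image on site $2$ onto $\ket{\up}_2$, after which the orthogonality of $\ket{\migi}_3$ and $\ket{\hdr}_3$ forces $h_2\ket{\chi}=0$ to split into the two independent conditions $(1-\sx_4)\ket{\chi_+}=0$ and $(1+\sx_4)\ket{\chi_-}=0$. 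Iterating the same argument with $h_3,\ldots,h_{L-2}$ extends the alignment through every remaining site, collapsing $V_L^{(\up)}$ to the two-dimensional span of $\ket{\migi}^{\otimes(L-1)}$ and $\ket{\hdr}^{\otimes(L-1)}$. Summing gives $\dim V_L = 2(L-1)+2 = 2L$, matching the lower bound and completing the proof.
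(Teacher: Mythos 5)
Your proof is correct and follows essentially the same route as the paper's: an induction on $L$ that resolves the first site ($\downarrow_1$ reduces to the length-$(L-1)$ problem, $\uparrow_1$ forces full $x$-alignment of the tail), with your iterative $h_2,\dots,h_{L-2}$ argument in the $\uparrow_1$ sector reproducing the content of the paper's propagation lemma (Lemma~\ref{lem:generic_jl}, there proved via the anticommutator $\{\sigma^z,\sigma^x\}=0$ rather than your explicit orthogonal splitting into $\ket{\rightarrow\rightarrow}\ket{\chi_+}+\ket{\leftarrow\leftarrow}\ket{\chi_-}$). Finishing by a dimension count instead of a two-sided containment is an equivalent, equally valid way to conclude.
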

%%%%%%%%%%%%%%%%%%%%%%%%%%%%%%%%%%%%%%%%%%%%%%%%%%%%%%%%%%%
\subsection{Preliminaries}
Before proving the theorems, we prepare two lemmas.
Let us introduce a subspace
\begin{align*}
    \mathcal{G}_L
    \coloneqq \bigcap_{j=1}^{L-2}\ker(1+\sigma_j^z)(1-\sigma_{j+1}^x\sigma_{j+2}^x).
\end{align*}
of the Hilbert space $(\mathbb{C}^2)^{\otimes L}$.
Note that $H\qty(\alpha=\hf)$ is written as the sum of $(1+\sigma_j^z)(1-\sigma_{j+1}^x\sigma_{j+2}^x)/4$ except for boundary terms or a constant:
\begin{align*}
    H\qty(\alpha=\hf)
    = \frac{1}{4}\sum_{j=1}^{L-2}(1+\sigma_j^z)(1-\sigma_{j+1}^x\sigma_{j+2}^x) + \cdots.
\end{align*}

\begin{lem}\label{lem:generic_jl}
    For any $\ket{\Omega}\in\mathcal{G}_L$, it holds that
    \begin{align}\label{eq:generic_jl}
        (1+\sigma_j^z)(1-\sigma_{\ell+1}^x\sigma_{\ell+2}^x)\ket{\Omega}
        = 0
    \end{align}
    for $1\le j \le \ell\le L-2$.
\end{lem}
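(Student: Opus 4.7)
My plan is to prove Lemma \ref{lem:generic_jl} by induction on the separation $\ell-j \geq 0$. The base case $\ell=j$ is precisely the defining condition of $\mathcal{G}_L$, so there is nothing to check. For the inductive step, set $\ell = j+m$ with $m\geq 1$, denote $P_i^{\uparrow} \coloneqq \tfrac{1}{2}(1+\sigma_i^z)$ and $P_i^{\downarrow}\coloneqq\tfrac{1}{2}(1-\sigma_i^z)$, and rewrite the target identity in the equivalent form $\sigma_{\ell+1}^x\sigma_{\ell+2}^x P_j^{\uparrow}\ket{\Omega} = P_j^{\uparrow}\ket{\Omega}$. The two ingredients I plan to combine are the inductive hypothesis $\sigma_{\ell}^x\sigma_{\ell+1}^x P_j^{\uparrow}\ket{\Omega} = P_j^{\uparrow}\ket{\Omega}$ (obtained at separation $m-1$) and the defining relation at index $\ell$, namely $\sigma_{\ell+1}^x\sigma_{\ell+2}^x P_{\ell}^{\uparrow}\ket{\Omega} = P_{\ell}^{\uparrow}\ket{\Omega}$.

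The core step is to split $P_j^{\uparrow}\ket{\Omega} = P_{\ell}^{\uparrow}P_j^{\uparrow}\ket{\Omega} + P_{\ell}^{\downarrow}P_j^{\uparrow}\ket{\Omega}$ along the spin at site $\ell$. The $P_{\ell}^{\uparrow}$ piece is immediate: since $P_j^{\uparrow}$ commutes with $\sigma_{\ell+1}^x\sigma_{\ell+2}^x$ (different sites because $j<\ell$), the defining relation at $\ell$ transports through and yields $\sigma_{\ell+1}^x\sigma_{\ell+2}^x P_{\ell}^{\uparrow}P_j^{\uparrow}\ket{\Omega} = P_{\ell}^{\uparrow}P_j^{\uparrow}\ket{\Omega}$. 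The $P_{\ell}^{\downarrow}$ piece is the delicate one, and I plan to treat it by writing $\ket{\psi}\coloneqq P_j^{\uparrow}\ket{\Omega}$ in the form $\ket{\uparrow}_{\ell}\otimes\ket{\psi_+} + \ket{\downarrow}_{\ell}\otimes\ket{\psi_-}$. The inductive hypothesis then translates into the single site-pair identity $\ket{\psi_-} = \sigma_{\ell+1}^x\ket{\psi_+}$, while the first-piece computation above gives $\sigma_{\ell+1}^x\sigma_{\ell+2}^x\ket{\psi_+} = \ket{\psi_+}$.

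Given these two relations, a short algebraic manipulation using $\sigma_{\ell+1}^x\sigma_{\ell+2}^x\sigma_{\ell+1}^x = \sigma_{\ell+2}^x$ and $\sigma_{\ell+2}^x\ket{\psi_+} = \sigma_{\ell+1}^x\ket{\psi_+} = \ket{\psi_-}$ shows that $\sigma_{\ell+1}^x\sigma_{\ell+2}^x\ket{\psi_-} = \ket{\psi_-}$ as well, hence $\sigma_{\ell+1}^x\sigma_{\ell+2}^x\ket{\psi} = \ket{\psi}$. This closes the induction. The main obstacle is conceptual rather than computational: the inductive hypothesis only constrains how $P_j^{\uparrow}\ket{\Omega}$ looks on the bond $(\ell,\ell+1)$, while the defining relation at $\ell$ constrains the bond $(\ell+1,\ell+2)$ but only on the $\sigma_{\ell}^z=+1$ subspace. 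The trick is to use the inductive hypothesis itself as the ``bridge'' that relates the $P_{\ell}^{\downarrow}$ sector back to the $P_{\ell}^{\uparrow}$ sector, at which point the defining relation at $\ell$ takes over.
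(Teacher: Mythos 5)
Your proof is correct and follows essentially the same route as the paper's: the same induction (for fixed $j$, stepping $\ell$ up from $j$), combining the inductive hypothesis on the bond $(\ell,\ell+1)$ with the defining relation of $\mathcal{G}_L$ at index $\ell$. The only difference is presentational: where you decompose $P_j^{\uparrow}\ket{\Omega}$ explicitly in the $\sigma_\ell^z$ eigenbasis and transport the constraint from $\ket{\psi_+}$ to $\ket{\psi_-}=\sigma_{\ell+1}^x\ket{\psi_+}$, the paper packages the identical local computation into the single anticommutator identity $0=\{\sigma_{\ell}^z,\sigma_{\ell}^x\}(1+\sigma_j^z)(1-\sigma_{\ell+1}^x\sigma_{\ell+2}^x)\ket{\Omega}=-2\sigma_{\ell+1}^x(1+\sigma_j^z)(1-\sigma_{\ell+1}^x\sigma_{\ell+2}^x)\ket{\Omega}$.
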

\begin{proof}
    For fixed $j\in\{1,\ldots,L-2\}$, 
    we proceed by induction on $\ell\in\{j,\ldots,L-2\}$.
    By the definition of $\mathcal{G}_L$, \eqref{eq:generic_jl} holds for $\ell=j$.
    Now we suppose that \eqref{eq:generic_jl} holds for $\ell=m\in\{j,\ldots,L-3\}$, which reads
    \begin{align*}
        \sigma_{m+1}^x(1+\sigma_j^z)\ket{\Omega}
        = \sigma_{m+2}^x(1+\sigma_j^z)\ket{\Omega}.
    \end{align*}
    Combining this with
    \begin{align*}
        \sigma_{m+1}^z(1-\sigma_{m+2}^x\sigma_{m+3}^x)\ket{\Omega}
        = -(1-\sigma_{m+2}^x\sigma_{m+3}^x)\ket{\Omega},
    \end{align*}
    we have
    \begin{align*}
        0
        &= \{\sigma_{m+1}^z,\sigma_{m+1}^x\}(1+\sigma_j^z)(1-\sigma_{m+2}^x\sigma_{m+3}^x)\ket{\Omega} \notag\\
        &= -2\sigma_{m+2}^x(1+\sigma_j^z)(1-\sigma_{m+2}^x\sigma_{m+3}^x)\ket{\Omega},
    \end{align*}
    which follows \eqref{eq:generic_jl} for $\ell=m+1$.
\end{proof}

To emphasize the Hilbert space containing them, we introduce
\begin{align*}
    \ket{j,\rightarrow}_L\coloneqq\ket{\downarrow}^{\otimes (j-1)}\otimes\ket{\uparrow}\otimes\ket{\rightarrow}^{\otimes (L-j)} \in (\mathbb{C}^2)^{\otimes L}
\end{align*}
and
\begin{align*}
    \ket{j,\leftarrow}_L\coloneqq\ket{\downarrow}^{\otimes (j-1)}\otimes\ket{\uparrow}\otimes\ket{\leftarrow}^{\otimes (L-j)} \in (\mathbb{C}^2)^{\otimes L}
\end{align*}
for $1\le j\le L-2$.
%%%%%%%%%%%%%%%%%%%%%%%%%%%%%%%%%%%%%%%%%%%%%%%%%%%%%%%%%%%
\begin{lem}\label{lem:all_elements_of_G_L}
    \begin{align}\label{eq:all_elements_of_G_L}
        \mathcal{G}_L
        = \sum_{j=1}^{L-2}\left(\mathbb{C}\ket{j,\rightarrow}_L+\mathbb{C}\ket{j,\leftarrow}_L\right)
        + \ket{\downarrow}^{\otimes (L-2)}\otimes(\mathbb{C}^2)^{\otimes 2}.
    \end{align}
\end{lem}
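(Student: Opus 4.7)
The plan is to prove the two inclusions in \eqref{eq:all_elements_of_G_L} separately. The inclusion $\supseteq$ is routine: applying each generator $(1+\sigma_j^z)(1-\sigma_{j+1}^x\sigma_{j+2}^x)$ to a basis vector on the right-hand side, one checks that on $\ket{j_0,s}_L$ (with $s\in\{\rightarrow,\leftarrow\}$) the factor $(1+\sigma_j^z)$ kills everything for $j<j_0$, since site $j$ carries $\ket{\downarrow}$, while at $j=j_0$ the pair $(j_0+1,j_0+2)$ sits in the $+1$ eigenstate of $\sigma^x\sigma^x$, and at $j>j_0$ the same holds for the pair $(j+1,j+2)$; any vector in $\ket{\downarrow}^{\otimes(L-2)}\otimes(\mathbb{C}^2)^{\otimes 2}$ is annihilated because site $j$ is $\ket{\downarrow}$ for every $j\le L-2$.

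For the non-trivial inclusion $\subseteq$, the plan is induction on $L$, with base case $L=2$ where the defining intersection is empty and both sides equal $(\mathbb{C}^2)^{\otimes 2}$. In the inductive step, take $\ket{\Omega}\in\mathcal{G}_L$ and decompose it according to the spin at site $1$,
\begin{equation*}
\ket{\Omega}=\ket{\uparrow}_1\otimes\ket{\alpha}+\ket{\downarrow}_1\otimes\ket{\beta},
\end{equation*}
with $\ket{\alpha},\ket{\beta}$ supported on sites $2,\ldots,L$. The constraints defining $\mathcal{G}_L$ with index $j=2,\ldots,L-2$ act only on sites $\ge 2$ and therefore impose, separately on $\ket{\alpha}$ and $\ket{\beta}$, exactly the relations that define $\mathcal{G}_{L-1}$ after the shift $i\mapsto i-1$. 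The inductive hypothesis then writes $\ket{\beta}$ as a linear combination of the stated basis vectors; prepending $\ket{\downarrow}_1$ turns each such basis vector into an admissible one (the position of the up-spin is increased by one, or the uniform $\downarrow$-block is extended by one), so the $\ket{\downarrow}_1$ component of $\ket{\Omega}$ already lies in the desired span.

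The crucial step is to constrain $\ket{\alpha}$, and this is where Lemma~\ref{lem:generic_jl} becomes indispensable. Setting $j=1$ in that lemma gives $(1+\sigma_1^z)(1-\sigma_{\ell+1}^x\sigma_{\ell+2}^x)\ket{\Omega}=0$ for every $\ell=1,\ldots,L-2$. Projecting onto the $\ket{\uparrow}_1$ sector converts this into the much stronger condition $\sigma_m^x\sigma_{m+1}^x\ket{\alpha}=\ket{\alpha}$ for every $m=2,\ldots,L-1$. These commuting operators generate the full family $\{\sigma_i^x\sigma_j^x:2\le i,j\le L\}$, whose simultaneous $+1$ eigenspace on the $L-1$ tail sites is exactly the two-dimensional subspace spanned by $\ket{\rightarrow}^{\otimes(L-1)}$ and $\ket{\leftarrow}^{\otimes(L-1)}$. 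Hence $\ket{\uparrow}_1\otimes\ket{\alpha}\in\mathbb{C}\ket{1,\rightarrow}_L+\mathbb{C}\ket{1,\leftarrow}_L$, and combined with the previous paragraph this puts $\ket{\Omega}$ in the right-hand side of \eqref{eq:all_elements_of_G_L}, completing the induction. A quick dimension tally $\,2+2(L-3)+4=2L$ matches the expected degeneracy and serves as a sanity check.

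The main obstacle is conceptual rather than computational: using only the constraint at $j=1$ from the original definition of $\mathcal{G}_L$ one extracts merely $(1-\sigma_2^x\sigma_3^x)\ket{\alpha}=0$, which is far too weak to force $\ket{\alpha}$ to be $\sigma^x$-aligned along the entire tail. The propagation result encapsulated in Lemma~\ref{lem:generic_jl} is precisely what extends the $\sigma^x\sigma^x$-eigenvalue condition to every adjacent pair to the right of site $1$, collapsing the otherwise large candidate space for $\ket{\alpha}$ down to the claimed two-dimensional line.
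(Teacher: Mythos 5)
Your proof is correct and takes essentially the same route as the paper's: both peel off site $1$, use Lemma~\ref{lem:generic_jl} to force the $\ket{\uparrow}_1$ component onto the two-dimensional ferromagnetic line $\mathbb{C}\ket{1,\rightarrow}_L+\mathbb{C}\ket{1,\leftarrow}_L$, and handle the $\ket{\downarrow}_1$ component by the inductive hypothesis for $\mathcal{G}_{L-1}$. The only differences are cosmetic (base case $L=2$ versus $L=3$, and phrasing the site-$1$ split via sector projection rather than via the image and kernel of $1+\sigma_1^z$).
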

\begin{proof}
    We can check $(\text{LHS})\supset(\text{RHS})$ by calculating
    \begin{gather*}
        (1+\sigma_\ell^z)(1-\sigma_{\ell+1}^x\sigma_{\ell+2}^x)\ket{j,\rightarrow}_L
        = 0, \\
        (1+\sigma_\ell^z)(1-\sigma_{\ell+1}^x\sigma_{\ell+2}^x)\ket{j,\leftarrow}_L
        = 0, \\
        (1+\sigma_\ell^z)(1-\sigma_{\ell+1}^x\sigma_{\ell+2}^x)\left(\ket{\downarrow}^{\otimes(L-2)}\otimes\ket{\psi}\right)
        = 0
    \end{gather*}
    directly for any $1\le\ell\le L-2$, $1\le j \le L-2$, and $\ket{\psi}\in(\mathbb{C}^2)^{\otimes 2}$.

    The proof of $(\text{LHS})\subset(\text{RHS})$ is by induction on $L\ge 3$. 
    We observe 
    \begin{align*}
        \mathcal{G}_3
        &\coloneqq \ker(1+\sigma_1^z)(1-\sigma_2^x\sigma_3^x) \\
        &= \mathbb{C}\ket{1,\rightarrow}_3 + \mathbb{C}\ket{1,\leftarrow}_3 + \ket{\downarrow}\otimes(\mathbb{C}^2)^{\otimes 2}
    \end{align*}
    by straightforward calculation.
    Now we suppose that 
    \begin{align}\label{eq:inductive_assumption}
        \mathcal{G}_L
        \subset \sum_{j=1}^{L-2}\left(\mathbb{C}\ket{j,\rightarrow}_L+\mathbb{C}\ket{j,\leftarrow}_L\right)
        + \ket{\downarrow}^{\otimes (L-2)}\otimes(\mathbb{C}^2)^{\otimes 2}
    \end{align}
    holds for an integer $L\ge 3$.
    We pick a vector $\ket{\Omega}_{L+1}\in\mathcal{G}_{L+1}$.
    From Lemma \ref{lem:generic_jl}, it holds that
    \begin{align*}
        (1-\sigma_{\ell+1}^x\sigma_{\ell+2}^x)(1+\sigma_1^z)\ket{\Omega}_{L+1}
        = 0
    \end{align*}
    for any $1\le \ell\le L-2$, which leads to
    \begin{align*}
        (1+\sigma_1^z)\ket{\Omega}_{L+1}
        \in \mathbb{C}^2\otimes\ket{\rightarrow}^{\otimes L} + \mathbb{C}^2\otimes\ket{\leftarrow}^{\otimes L}.
    \end{align*}
    Since the image and kernel of $1+\sigma_1^z$ in $(\mathbb{C}^2)^{\otimes(L+1)}$ are
    \begin{align*}
        \operatorname{im}(1+\sigma_1^z)
        = \ket{\uparrow}\otimes(\mathbb{C}^2)^{\otimes L},\ 
        \ker(1+\sigma_1^z)
        = \ket{\downarrow}\otimes(\mathbb{C}^2)^{\otimes L},
    \end{align*}
    we have 
    \begin{align*}
        \ket{\Omega}_{L+1}
        &\in \mathbb{C}\ket{\uparrow}\otimes\ket{\rightarrow}^{\otimes L} + \mathbb{C}\ket{\uparrow}\otimes\ket{\leftarrow}^{\otimes L} + \ket{\downarrow}\otimes(\mathbb{C}^2)^{\otimes L} \notag\\
        &= \mathbb{C}\ket{1,\rightarrow}_{L+1} + \mathbb{C}\ket{1,\leftarrow}_{L+1} + \ket{\downarrow}\otimes(\mathbb{C}^2)^{\otimes L};
    \end{align*}
    that is, there exist scalars $a,b\in\mathbb{C}$, and a vector $\ket{\Omega}_L\in(\mathbb{C}^2)^{\otimes L}$ such that
    \begin{align*}
        \ket{\Omega}_{L+1}
        = a\ket{1,\rightarrow}_{L+1} + b\ket{1,\leftarrow}_{L+1} + \ket{\downarrow}\otimes\ket{\Omega}_L.
    \end{align*}
    %%%%%%%%%%%%%%%%%%%%%%%%%%%%%%%%%%%%%%%%%%%%%%%%%%%%%%%
    Since 
    \begin{align*}
        0
        &= (1+\sigma_{j+1}^z)(1-\sigma_{j+2}^x\sigma_{j+3}^x)\ket{\Omega}_{L+1} \notag\\
        &= \ket{\downarrow}\otimes(1+\sigma_j^z)(1-\sigma_{j+1}^x\sigma_{j+2}^x)\ket{\Omega}_L
    \end{align*}
    for $1\le j\le L-2$, $\ket{\Omega}_L$ is a vector belonging to the subspace $\mathcal{G}_L$.
    The inductive assumption \eqref{eq:inductive_assumption} results in
    \begin{align*}
        \ket{\Omega}_L \in \sum_{j=1}^{L-2}\left(\mathbb{C}\ket{j,\rightarrow}_L+\mathbb{C}\ket{j,\leftarrow}_L\right)
        + \ket{\downarrow}^{\otimes (L-2)}\otimes(\mathbb{C}^2)^{\otimes 2}
    \end{align*}
    and thus
    \begin{widetext}
        \begin{align*}
        \ket{\Omega}_{L+1} &\in \mathbb{C}\ket{1,\rightarrow}_{L+1} + \mathbb{C}\ket{1,\leftarrow}_{L+1} +\ket{\downarrow}\otimes\Bigg(\sum_{j=1}^{L-2}\left(\mathbb{C}\ket{j,\rightarrow}_L+\mathbb{C}\ket{j,\leftarrow}_L\right) + \ket{\downarrow}^{\otimes (L-2)}\otimes(\mathbb{C}^2)^{\otimes 2}\Bigg) \nt
        &= \sum_{j=1}^{L-1}\left(\mathbb{C}\ket{j,\rightarrow}_{L+1}+\mathbb{C}\ket{j,\leftarrow}_{L+1}\right) + \ket{\downarrow}^{\otimes(L-1)}\otimes(\mathbb{C}^2)^{\otimes 2},
    \end{align*}
    which indicates that
    \begin{align*}
        \mathcal{G}_{L+1}
        \subset \sum_{j=1}^{L-1}\left(\mathbb{C}\ket{j,\rightarrow}_{L+1}+\mathbb{C}\ket{j,\leftarrow}_{L+1}\right)
        + \ket{\downarrow}^{\otimes (L-1)}\otimes(\mathbb{C}^2)^{\otimes 2}.
    \end{align*}
    \end{widetext}
\end{proof}

Finding the relations
\begin{gather*}
    \ket{\rightarrow}^{\otimes L}
    = \sum_{j=1}^{L-2}\left(\frac{1}{\sqrt{2}}\right)^j \ket{j,\rightarrow}_L + \left(\frac{1}{\sqrt{2}}\right)^{L-2}\ket{\downarrow\cdots\downarrow\rightarrow\rightarrow}, \\
    \ket{\leftarrow}^{\otimes L}
    = \sum_{j=1}^{L-2}\left(\frac{1}{\sqrt{2}}\right)^j \ket{j,\leftarrow}_L + \left(\frac{1}{\sqrt{2}}\right)^{L-2}\ket{\downarrow\cdots\downarrow\leftarrow\leftarrow},
\end{gather*}
we obtain another expression
\begin{align*}
    \mathcal{G}_L
    = &\mathbb{C}\ket{\rightarrow}^{\otimes L} + \mathbb{C}\ket{\leftarrow}^{\otimes L}  \notag\\ 
    &+ \sum_{j=2}^{L-2}\left(\mathbb{C}\ket{j,\rightarrow}_L+\mathbb{C}\ket{j,\leftarrow}_L\right) + \ket{\downarrow}^{\otimes (L-2)}\otimes(\mathbb{C}^2)^{\otimes 2}
\end{align*}
of Lemma \ref{lem:all_elements_of_G_L}.
Note that 
\begin{align}\label{eq:another_form_of_G_L}
    \mathcal{G}_L
    \subset \mathbb{C}\ket{\rightarrow}^{\otimes L} + \mathbb{C}\ket{\leftarrow}^{\otimes L} + \ket{\downarrow}\otimes(\mathbb{C}^2)^{\otimes (L-1)}.
\end{align}
%%%%%%%%%%%%%%%%%%%%%%%%%%%%%%%%%%%%%%%%%%%%%%%%%%%%%%%%%%%
\subsection{Proof of Theorem \ref{thm:PBC}}
The Hamiltonian $H\qty(\alpha=\hf)$ under the periodic boundary condition is of the form
\begin{align*}
    H\qty(\alpha=\hf)
    &= \frac{1}{4}\sum_{j=1}^{L-2}(1+\sigma_j^z)(1-\sigma_{j+1}^x\sigma_{j+2}^x) \notag\\
    &\quad+ \frac{1}{4}(1+\sigma_{L-1}^z)(1+(-1)^F\sigma_L^x\sigma_1^x) \notag\\
    &\quad+ \frac{1}{4}(1+\sigma_L^z)(1-\sigma_1^x\sigma_2^x)
\end{align*}
and every term is positive-semidefinite:
\begin{gather*}
    (1+\sigma_j^z)(1-\sigma_{j+1}^x\sigma_{j+2}^x)\ge 0 \quad\text{for}\ 1\le j\le L-2, \\
    (1+\sigma_{L-1}^z)(1+(-1)^F\sigma_L^x\sigma_1^x)\ge 0,
    \quad
    (1+\sigma_L^z)(1-\sigma_1^x\sigma_2^x)\ge 0,
\end{gather*}
which indicates that 
\begin{align*}
    \mathcal{E}_{\mathrm{PBC}}
    = \mathcal{G}_L &\cap \ker(1+\sigma_{L-1}^z)(1+(-1)^F\sigma_L^x\sigma_1^x) \notag\\
    &\cap \ker(1+\sigma_L^z)(1-\sigma_1^x\sigma_2^x).
\end{align*}

\begin{lem}\label{lem:Z_L-1X_1X_2}
    Any $\ket{\Omega}\in \ker(1+\sigma_{L-1}^z)(1+(-1)^F\sigma_L^x\sigma_1^x) \cap \ker(1+\sigma_L^z)(1-\sigma_1^x\sigma_2^x)$ satisfies
    \begin{align}\label{eq:Z_L-1X_1X_2}
        (1+\sigma_{L-1}^z)(1-\sigma_1^x\sigma_2^x)\ket{\Omega}
        = 0.
    \end{align}
\end{lem}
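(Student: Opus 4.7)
The plan is to recast the claim in terms of the auxiliary vector $\ket{\psi}\coloneqq(1-\sigma_1^x\sigma_2^x)\ket{\Omega}$, so that proving the lemma reduces to showing $(1+\sigma_{L-1}^z)\ket{\psi}=0$. I will then exploit the anticommutation $\{\sigma_L^z,\sigma_L^x\}=0$, in the same spirit as Lemma~\ref{lem:generic_jl}, to combine the two boundary hypotheses into a single relation that eliminates the nonlocal $(-1)^F$-term.

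First, condition (b) is literally $(1+\sigma_L^z)\ket{\psi}=0$, i.e.\ $\sigma_L^z\ket{\psi}=-\ket{\psi}$. Next, condition (a) has to be transcribed in terms of $\ket{\psi}$. Assuming $L\ge 4$, the prefactor $(1-\sigma_1^x\sigma_2^x)$ commutes with every operator appearing in (a): trivially with $(1+\sigma_{L-1}^z)$ by disjoint supports; with $(-1)^F$ because $(-1)^F$ anticommutes with each of $\sigma_1^x$ and $\sigma_2^x$; and with $\sigma_L^x\sigma_1^x$ by a short direct check. Hypothesis (a) therefore passes through and yields
\[
  (1+\sigma_{L-1}^z)\bigl(1+(-1)^F\sigma_L^x\sigma_1^x\bigr)\ket{\psi}=0.
\]

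The decisive step is to act on this identity from the left by $\sigma_L^z$. Setting $O\coloneqq(-1)^F\sigma_L^x\sigma_1^x$, the crucial observation is $\{\sigma_L^z,O\}=0$: the only factor inside $O$ that does not commute with $\sigma_L^z$ is the lone $\sigma_L^x$, and this gives the operator identity $\sigma_L^z(1+O)=(1-O)\sigma_L^z$. Combining this with $\sigma_L^z\ket{\psi}=-\ket{\psi}$ and the fact that $\sigma_L^z$ commutes with $(1+\sigma_{L-1}^z)$ produces $(1+\sigma_{L-1}^z)(1-O)\ket{\psi}=0$. Adding this to the previous identity cancels the $O$-dependent pieces and leaves $2(1+\sigma_{L-1}^z)\ket{\psi}=0$, which is exactly the lemma's conclusion.

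The main obstacle I anticipate is the edge case $L=3$: there $L-1=2$ and the prefactor $(1-\sigma_1^x\sigma_2^x)$ no longer commutes with $(1+\sigma_{L-1}^z)=(1+\sigma_2^z)$, so the transcription of (a) to $\ket{\psi}$ picks up an extra sign from $\sigma_2^z\sigma_1^x\sigma_2^x=-\sigma_1^x\sigma_2^x\sigma_2^z$. I would handle this case separately, either by decomposing $\ket{\Omega}$ into its two fermion-parity sectors (both (a) and (b) preserve $(-1)^F$) and then verifying the claim inside the resulting eight-dimensional Hilbert space by a direct calculation, or by tracking the extra sign carefully through a slightly modified version of the same anticommutator manipulation.
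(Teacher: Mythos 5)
Your proof is correct and takes essentially the same route as the paper's: both rest on the anticommutation $\{\sigma_L^z,\sigma_L^x\}=0$ together with the observation that condition (b) makes $(1-\sigma_1^x\sigma_2^x)\ket{\Omega}$ a $(-1)$-eigenvector of $\sigma_L^z$, so that acting with $\sigma_L^z$ flips the sign of the $\sigma_L^x$-dependent piece of condition (a) and the two resulting relations combine to give \eqref{eq:Z_L-1X_1X_2}. Your flag on $L=3$ is fair (the paper's own manipulation also silently commutes $(1-\sigma_1^x\sigma_2^x)$ past $(1+\sigma_{L-1}^z)$), but it does not change the verdict that the two arguments coincide.
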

\begin{proof}
    This proof is similar to the proof of Lemma \ref{lem:generic_jl}.
    Since $\ket{\Omega}$ satisfies
    \begin{gather*}
        \sigma_L^z(1-\sigma_1^x\sigma_2^x)\ket{\Omega}
        = -(1-\sigma_1^x\sigma_2^x)\ket{\Omega}, \\
        \sigma_L^x(1+\sigma_{L-1}^z)\ket{\Omega}
        = (-1)^F \sigma_1^x(1+\sigma_{L-1}^z)\ket{\Omega},
    \end{gather*}
    we have
    \begin{align*}
        0
        &= \{\sigma_L^z,\sigma_L^x\}(1+\sigma_{L-1}^z)(1-\sigma_1^x\sigma_2^x)\ket{\Omega} \notag\\
        &= -2(-1)^F \sigma_1^x(1+\sigma_{L-1}^z)(1-\sigma_1^x\sigma_2^x)\ket{\Omega},
    \end{align*}
    which follows \eqref{eq:Z_L-1X_1X_2}.
\end{proof}

Based on the above preparation, let us prove Theorem \ref{thm:PBC}, in particular the expression \eqref{eq:thmPBC}.

We can check $\mathcal{E}_{\mathrm{PBC}}\supset\mathbb{C}\ket{\mathrm{A}}+\mathbb{C}\ket{\mathrm{B}}$ by calculating
\begin{align*}
    H\qty(\alpha=\hf)\ket{\mathrm{A}}
    = H\qty(\alpha=\hf)\ket{\mathrm{B}}
    = 0
\end{align*}
directly.

In order to prove $\mathcal{E}_{\mathrm{PBC}}\subset\mathbb{C}\ket{\mathrm{A}}+\mathbb{C}\ket{\mathrm{B}}$, we pick a vector $\ket{\Omega}\in\mathcal{E}_{\mathrm{PBC}}$.
Due to \eqref{eq:another_form_of_G_L}, $\ket{\Omega}\in\mathcal{G}_L$ is written as
\begin{align*}
    \ket{\Omega}
    = a\ket{\rightarrow}^{\otimes L} + b\ket{\leftarrow}^{\otimes L} + \ket{\downarrow}\otimes\ket{\Psi}_{L-1}
\end{align*}
with scalars $a,b\in\mathbb{C}$ and a vector $\ket{\Psi}_{L-1}\in(\mathbb{C}^2)^{\otimes(L-1)}$.
%%%%%%%%%%%%%%%%%%%%%%%%%%%%%%%%%%%%%%%%%%%%%%%%%%%%%%%%%%%
It also holds that
\begin{align*}
    0
    &= (1+\sigma_{L-1}^z)(1-\sigma_1^x\sigma_2^x)\ket{\Omega} \notag\\
    &= \ket{\downarrow}\otimes (1+\sigma_{L-2}^z)\ket{\Psi}_{L-1} - \ket{\uparrow}\otimes(1+\sigma_{L-2}^z)\sigma_1^x\ket{\Psi}_{L-1}
\end{align*}
from Lemma \ref{lem:Z_L-1X_1X_2}, 
\begin{align*}
    0
    &= (1+\sigma_L^z)(1-\sigma_1^x\sigma_2^x)\ket{\Omega} \notag\\
    &= \ket{\downarrow}\otimes (1+\sigma_{L-1}^z)\ket{\Psi}_{L-1} - \ket{\uparrow}\otimes(1+\sigma_{L-1}^z)\sigma_1^x\ket{\Psi}_{L-1},
\end{align*}
and 
\begin{align*}
    0
    &= (1+\sigma_j^z)(1-\sigma_{j+1}^x\sigma_{j+2}^x)\ket{\Omega} \notag\\
    &= \ket{\downarrow}\otimes(1+\sigma_{j-1}^z)(1-\sigma_j^x\sigma_{j+1}^x)\ket{\Psi}_{L-1}
\end{align*}
for $2\le j\le L-2$,
which leads to the constraints
\begin{gather}
    (1+\sigma_{L-2}^z)\ket{\Psi}_{L-1}
    = (1+\sigma_{L-1}^z)\ket{\Psi}_{L-1}
    = 0, \label{eq:constraint1}\\
    (1+\sigma_j^z)(1-\sigma_{j+1}^x\sigma_{j+2}^x)\ket{\Psi}_{L-1}
    = 0
    \quad
    \text{for}\ 1\le j\le L-3. \label{eq:constraint2}
\end{gather}
The constraint \eqref{eq:constraint1} enforces that there exists $\ket{\Psi}_{L-3}\in(\mathbb{C}^2)^{L-3}$ such that
\begin{align}\label{eq:constraint3}
    \ket{\Psi}_{L-1}
    = \ket{\Psi}_{L-3}\otimes\ket{\downarrow\downarrow}.
\end{align}

\begin{lem}\label{lem:all_down}
    If a vector $\ket{\Phi}_L\in(\mathbb{C}^2)^L$ satisfies
    \begin{align}\label{eq:constraint4}
        (1+\sigma_j^z)(1-\sigma_{j+1}^x\sigma_{j+2}^x)(\ket{\Phi}_L\otimes\ket{\downarrow\downarrow})
        = 0
    \end{align}
    for $1\le j\le L$,
    then $\ket{\Phi}_L\in\mathbb{C}\ket{\downarrow}^{\otimes L}$.
\end{lem}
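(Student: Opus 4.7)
The plan is to perform a downward induction on the site index $k$, running from $k=L$ down to $k=1$, showing at each step that the $k$-th site of $\ket{\Phi}_L$ must be $\ket{\downarrow}$. The whole argument rests on one simple observation: whenever the state on sites $m+1,m+2$ is $\ket{\downarrow\downarrow}$, the operator $1-\sigma_{m+1}^x\sigma_{m+2}^x$ sends it to $\ket{\downarrow\downarrow}-\ket{\uparrow\uparrow}\ne 0$, so the $j=m$ hypothesis can be ``divided through'' by this nonzero tensor factor to leave only the action of $(1+\sigma_m^z)$ on $\ket{\Phi}_L$.

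For the base case $k=L$, I would apply the hypothesis at $j=L$. The two $x$-flipped sites are precisely the appended $\ket{\downarrow\downarrow}$, so the equation factorizes as
\begin{equation*}
\bigl[(1+\sigma_L^z)\ket{\Phi}_L\bigr]\otimes\bigl(\ket{\downarrow\downarrow}-\ket{\uparrow\uparrow}\bigr)=0,
\end{equation*}
which forces $(1+\sigma_L^z)\ket{\Phi}_L=0$, i.e.\ $\ket{\Phi}_L=\ket{\Phi^{(L-1)}}\otimes\ket{\downarrow}$ for some $\ket{\Phi^{(L-1)}}\in(\mathbb{C}^2)^{\otimes(L-1)}$. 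For the inductive step, assume sites $k+1,\dots,L$ of $\ket{\Phi}_L$ have already been shown to be $\ket{\downarrow}$, so that $\ket{\Phi}_L=\ket{\Phi^{(k)}}\otimes\ket{\downarrow}^{\otimes(L-k)}$ and sites $k+1,\dots,L+2$ of $\ket{\Phi}_L\otimes\ket{\downarrow\downarrow}$ are all in $\ket{\downarrow}$. Using the hypothesis at $j=k$, the constraint separates as
\begin{equation*}
\bigl[(1+\sigma_k^z)\ket{\Phi^{(k)}}\bigr]\otimes\bigl(\ket{\downarrow\downarrow}-\ket{\uparrow\uparrow}\bigr)_{k+1,k+2}\otimes\ket{\downarrow}^{\otimes(L-k)}=0,
\end{equation*}
and since the two trailing tensor factors are nonzero we conclude $(1+\sigma_k^z)\ket{\Phi^{(k)}}=0$, so site $k$ is also $\ket{\downarrow}$. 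Iterating down to $k=1$ gives $\ket{\Phi}_L\in\mathbb{C}\ket{\downarrow}^{\otimes L}$.

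This proof is essentially clean bookkeeping, and I do not anticipate any substantive obstacle. The only thing to verify carefully is the factorization step: at every $k$ one must check that the ``environment'' factor produced by $(1-\sigma_{k+1}^x\sigma_{k+2}^x)$ is genuinely nonzero, which is immediate from $\ket{\downarrow\downarrow}-\ket{\uparrow\uparrow}\ne 0$. Note also that the two appended $\ket{\downarrow}$'s are essential: without them, the constraints at $j=L-1$ and $j=L$ would not even be available, and the induction could not be seeded.
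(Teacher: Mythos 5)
Your proof is correct and is essentially the paper's argument: both peel off the rightmost undetermined site by noting that when sites $j+1,j+2$ are already $\ket{\downarrow\downarrow}$, the constraint at $j$ factorizes as $[(1+\sigma_j^z)\cdots]\otimes(\ket{\downarrow\downarrow}-\ket{\uparrow\uparrow})\otimes\cdots=0$ and hence forces site $j$ down. The paper phrases this as induction on the system size $L$ while you phrase it as downward induction on the site index, but the logical content is identical.
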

\begin{proof}
    This proof is by induction on $L\ge 1$.
    For $L=1$, we have
    \begin{align*}
        0
        &= (1+\sigma_1^z)(1-\sigma_2^x\sigma_3^x)(\ket{\Phi}_1\otimes\ket{\downarrow\downarrow}) \notag\\
        &= (1+\sigma_1^z)\ket{\Phi}_1\otimes\ket{\downarrow\downarrow} - (1+\sigma_1^z)\ket{\Phi}_1\otimes\ket{\uparrow\uparrow},
    \end{align*}
    which results in $(1+\sigma_1^z)\ket{\Phi}_1 = 0$, i.e., $\ket{\Phi}_1\in\mathbb{C}\ket{\downarrow}$.

    For an integer $L=N+1\ge 2$, we have
    \begin{align*}
        0
        &= (1+\sigma_L^z)(1-\sigma_{L+1}^x\sigma_{L+2}^x)(\ket{\Phi}_L\otimes\ket{\downarrow\downarrow}) \notag\\
        &= (1+\sigma_{N+1}^z)\ket{\Phi}_{N+1}\otimes\ket{\downarrow\downarrow} - (1+\sigma_{N+1}^z)\ket{\Phi}_{N+1}\otimes\ket{\uparrow\uparrow},
    \end{align*}
    which results in $(1+\sigma_{N+1}^z)\ket{\Phi}_{N+1} = 0 $, i.e., $\ket{\Phi}_{N+1} = \ket{\Phi}_N\otimes\ket{\downarrow}$ with a vector $\ket{\Phi}_N\in(\mathbb{C}^2)^N$. 
    Since 
    \begin{align*}
        0
        &= (1+\sigma_j^z)(1-\sigma_{j+1}^x\sigma_{j+2}^x)(\ket{\Psi}_{N+1}\otimes\ket{\downarrow\downarrow}) \notag\\
        &= [(1+\sigma_j^z)(1-\sigma_{j+1}^x\sigma_{j+2}^x)(\ket{\Phi}_N\otimes\ket{\downarrow\downarrow})]\otimes\ket{\downarrow}
    \end{align*}
    for $1\le j\le N$, 
    the vector $\ket{\Phi}_N$ satisfies \eqref{eq:constraint4}.
    
    Now we suppose that \eqref{eq:constraint4} holds for $L=N$.
    Then we obtain $\ket{\Phi}_N\in \mathbb{C}\ket{\downarrow}^{\otimes N}$, i.e., $\ket{\Phi}_{N+1} \in \mathbb{C}\ket{\downarrow}^{\otimes(N+1)}$. 
\end{proof}

From Lemma \ref{lem:all_down}, we have
\begin{align*}
    \ket{\Omega}
    = a\ket{\rightarrow}^{\otimes L} + b\ket{\leftarrow}^{\otimes L} + c\ket{\downarrow}^{\otimes L}.
\end{align*}
with a scalar $c\in\mathbb{C}$.
Since 
\begin{align*}
    0
    &= (1+\sigma_{L-1}^z)(1+(-1)^F\sigma_L^x\sigma_1^x)\ket{\Omega} \notag\\
    &= (a + (-1)^L b)(1+\sigma_{L-1}^z)\ket{\rightarrow}^{\otimes L} \notag\\
    &\quad + ((-1)^L a + b)(1+\sigma_{L-1}^z)\ket{\leftarrow}^{\otimes L},
\end{align*}
we obtain $b = -(-1)^L a$ and thus
\begin{align*}
    \ket{\Omega}
    = \sqrt{2}a\ket{\mathrm{A}} + c\ket{\mathrm{B}},
\end{align*}
which indicates that $\mathcal{E}_{\mathrm{PBC}}\subset\mathbb{C}\ket{\mathrm{A}}+\mathbb{C}\ket{\mathrm{B}}$.

Therefore Theorem \ref{thm:PBC} has been proven.
%%%%%%%%%%%%%%%%%%%%%%%%%%%%%%%%%%%%%%%%%%%%%%%%%%%%%%%%%%%
\subsection{Proof of Theorem \ref{thm:OBC}}
The Hamiltonian $H\qty(\alpha=\hf)$ under the open boundary condition is of the form
\begin{align*}
    H\qty(\alpha=\hf)
    = \frac{1}{4}\sum_{j=1}^{L-2}(1+\sigma_j^z)(1-\sigma_{j+1}^x\sigma_{j+2}^x) + \frac{1}{8}
\end{align*}
and every term is positive-semidefinite:
\begin{align*}
    (1+\sigma_j^z)(1-\sigma_{j+1}^x\sigma_{j+2}^x)\ge 0 \quad\text{for}\ 1\le j\le L-2,
\end{align*}
which indicates that $\mathcal{E}_{\mathrm{OBC}}=\mathcal{G}_L$.
This equality and Lemma \ref{lem:all_elements_of_G_L} result in the statement in Theorem \ref{thm:OBC}.

%%%%%% REFERENCES %%%%%%%%%%%%%%%%%%%%%%%%%%%%%%%%%%%%%
%\bibliographystyle{apsrev4-2}
%\bibliography{Maj_SUSY_Ref.bib} 
%
%%%%%%%%%%%%%%%%%%%%%%%%%%%%%%%%%%%%%%%%%%%%%%%%%%%
\end{document}